\newtheorem{theorem}{Theorem}
\newtheorem{proposition}{Proposition}
\newtheorem{definition}{Definition}
\newtheorem{assumption}{Assumption}
\newtheorem{example}{Example}
\newtheorem{claim}{Claim}
\newdefinition{remark}{Remark}
\newproof{proof}{Proof}
\DeclareMathOperator\arctanh{arctanh}
\DeclareMathOperator{\p}{\mathrm{p}}
\DeclareMathOperator{\ct}{\mathrm{c}}
\DeclareMathOperator{\mati}{mati}
\DeclareMathOperator{\miati}{miati}
\DeclareMathOperator{\pr}{\mathbf{P}}
\DeclareMathOperator{\diag}{diag}
\DeclareMathOperator{\dom}{dom}
\DeclareMathOperator{\graph}{graph}
\definecolor{bluencs}{rgb}{0.0, 0.53, 0.74}
\journal{}
\begin{document}

\begin{frontmatter}

\title{Emulation-based Stabilization for Networked Control Systems \\ with Stochastic Channels\tnoteref{label0}}
\tnotetext[label0]{This work was supported by the Fundamental Research Funds for the Central Universities under Grant DUT22RT(3)090, the National Natural Science Foundation of China under Grants 61890920, 61890921 and 08120003, the Australian Research Council under the Discovery Project DP200101303, and AFOSR grant FA9550-21-1-0452.}

\author[label1,label2]{Wei Ren\corref{cor1}}\ead{wei.ren@dlut.edu.cn}
\author[label3]{Wei Wang}\ead{wei.wang@unimelb.edu.au}
\author[label1,label2]{Zhuo-Rui Pan}\ead{panzhuorui@mail.dlut.edu.cn}
\author[label1,label2]{Xi-Ming Sun}\ead{sunxm@dlut.edu.cn}
\author[label4]{Andrew R. Teel}\ead{teel@ece.ucsb.edu}
\author[label3]{Dragan Ne{\v{s}}i{\'c}}\ead{dnesic@unimelb.edu.au}

\cortext[cor1]{Corresponding author.}
\address[label1]{Key Laboratory of Intelligent Control and Optimization for Industrial Equipment of Ministry of Education, Dalian University of Technology, Dalian 116024, China.}
\address[label2]{School of Control Science and Engineering, Dalian University of Technology, Dalian 116024, China.}
\address[label3]{Department of Electrical and Electronic Engineering, University of Melbourne, Parkville, 3052, Victoria, Australia.}
\address[label4]{Department of Electrical and Computer Engineering, University of California Santa Barbara, Santa Barbara, CA 93106, USA.}

\begin{abstract}
This paper studies the stabilization problem of networked control systems (NCSs) with random packet dropouts caused by stochastic channels. To describe the effects of stochastic channels on the information transmission, the transmission times are assumed to be deterministic, whereas the packet transmission is assumed to be random. We first propose a stochastic scheduling protocol to model random packet dropouts, and address the properties of the proposed stochastic scheduling protocol. The proposed scheduling protocol provides a unified modelling framework for a general class of random packet dropouts due to different stochastic channels. Next, the proposed scheduling protocol is embedded into the closed-loop system, which leads to a stochastic hybrid model for NCSs with random packet dropouts. Based on this stochastic hybrid model, we follow the emulation approach to establish sufficient conditions to guarantee uniform global asymptotical stability in probability. In particular, an upper bound on the maximally allowable transmission interval is derived explicitly for all stochastic protocols satisfying Lyapunov conditions that guarantee uniform global asymptotic stability in probability. Finally, two numerical examples are presented to demonstrate the derived results.
\end{abstract}

\begin{keyword}
Emulation approach \sep networked control systems \sep  stochastic hybrid systems \sep stochastic channels \sep stability analysis.
\end{keyword}

\end{frontmatter}

%%%%%%%%%%%%%%%%%%%%%%%%%%%%%%%%%%%%%%%%%%%%%%%%%%%%%%%%%%%%%%%%%%%%%%%%%%%%%%%%%%%%%%%%%%%%%%%%%%%%%%%%%%%%%%%%%%
\section{Introduction}
\label{sec-introduction}
%%%%%%%%%%%%%%%%%%%%%%%%%%%%%%%%%%%%%%%%%%%%%%%%%%%%%%%%%%%%%%%%%%%%%%%%%%%%%%%%%%%%%%%%%%%%%%%%%%%%%%%%%%%%%%%%%%

Networked control systems (NCSs) are dynamical systems in which sensors, controllers and/or actuators of the plant are physically distributed and communicate via digital channels \citep{Gupta2010networked}. The network is typically modeled as a serial communication channel with a finite bandwidth. NCSs have received considerable attention over the past decades due to numerous advantages over conventional control systems in terms of flexibility, maintenance and cost \citep{Donkers2011stability, Nesic2009unified, Heemels2010networked}. However, the communication over digital channels inevitably results in some issues, such as time-varying transmission intervals, packet dropouts, time-varying delays, scheduling protocols and quantization, which complicate the analysis and design on the one hand and deteriorate system performance and even result in instability on the other hand. As a result, system modeling, stability analysis and controller design of NCSs under these network-induced issues have attracted considerable attention in the past few decades. Many existing results can be found in \citep{Donkers2011stability, Carnevale2007lyapunov, Heemels2010networked, Nesic2004input, Nesic2009unified, Antunes2013stability} and references therein dealing with some of these issues.

For NCSs with the above issues, there are generally two main modelling strategies in the literature: the deterministic modelling strategy \citep{Nesic2004input, Hu2003stochastic, Donkers2011stability, Nesic2009unified} and the stochastic modelling strategy \citep{Hespanha2006stochastic, Donkers2012stability, Antunes2013stability, Liu2021stabilization, Hu2017modeling, Liu2021event}. In the deterministic modelling strategy, the effects of different issues are assumed to be constrained deterministically or transformed into some deterministic constraints. For instance, both transmission intervals and transmission delays are bounded in some intervals \citep{Carnevale2007lyapunov, Heemels2010networked}, and the emulation-based approach is applied to investigate the upper bounds of these intervals to ensure system performance. In \citep{Heemels2010networked, Van2012discrete}, the effects of packet dropouts on the network transmission are transformed into a tighter constraint on the upper bound of the transmission intervals. These deterministic constraints can be embedded into deterministic models to facilitate the modelling and analysis of NCSs. However, these deterministic models may result in conservatism in stability conditions, and are not available for stochastic channels \citep{Park2009generalized, Liu2021stabilization, Li2021output, Wang2007robust, Schenato2007foundations}. For instance, due to the transformation of the effects of packet dropouts on the network transmission \citep{Heemels2010networked, Van2012discrete}, the upper bound of the transmission intervals needs to be small enough to ensure the system stability \citep{Maass2019lp}. Stochastic channels result in random packet dropouts, which cannot be included into deterministic models.

To characterize the stochastic nature of network-induced issues more accurately and to derive less conservative results, stochastic models have been proposed for NCSs with different issues. Among these issues, intermittent data packet dropouts are attributes of non-ideal or stochastic channels and are a main reason for the performance deterioration. For instance, random packet dropouts occur due to a busy channel in carrier sense multiple access (CSMA) schemes \citep{Tabbara2008input, Park2009generalized}, packet collisions in WirelessHART networks \citep{Maass2019stabilization} and the acknowledge timeout in TCP-like networks \citep{Schenato2007foundations}. For random packet dropouts, some stochastic models have been proposed, such as Bernoulli distributed sequences \citep{Wang2007robust, Liu2021stabilization, Liu2021event, Hespanha2006stochastic} and Markovian models \citep{Tabbara2008input, Quevedo2012robust}. These models are either only applicable to some simple cases or based on some unrealistic assumptions. For instance, only packet dropouts are considered in \citep{Wang2007robust, Liu2021stabilization, Quevedo2012robust}, only stochastic transmission intervals are addressed in \citep{Antunes2013stability}, and the sequence of transmission intervals is assumed to be independent and identically distributed in \citep{Tabbara2008input, Maass2019stabilization}. However, the transmission intervals are bounded deterministically while random packet dropouts occur due to the nature of many practical networks like IEEE 802.11 MAC protocol \citep{Cali1998ieee} and IEEE 802.15.4 network \citep{Park2009generalized}. Therefore, it is necessary to explore the modelling and analysis of NCSs with random packet dropouts.

Motivated by these discussions, in this paper we consider NCSs with stochastic channels, and investigate the system modelling and stability analysis problems based on stochastic hybrid system theory. First, we propose a general framework to model the effects of random packet dropouts caused by stochastic channels. More specifically, the transmission intervals are assumed to be deterministic whereas the information transmission is random, which are characteristics that are different from those in \citep{Antunes2013stability, Tabbara2008input, Maass2019stabilization} as this is consistent with many practical networks \citep{Cali1998ieee, Park2009generalized}. In this respect, a unified stochastic difference system is developed to model random packet dropouts. Second, the stochastic difference model is embedded into the closed-loop system, thereby resulting in a novel stochastic hybrid model for NCSs with random packet dropouts. With the stochastic hybrid model, the emulation-based approach \citep{Wang2014emulated, Carnevale2007lyapunov} is implemented to propose a novel bound for the maximal allowable transmission interval (MATI) to guarantee the stochastic stability property of the whole system. Finally, we apply the derived results to the case of linear systems with specific networks, and show the flexibility and generality of the proposed model and approach. It has to be noticed that we focus on Ethernet-like and WirelessHART networks \citep{Tabbara2008input, Maass2019stabilization}, and that we propose a different stability analysis as well as a different MATI bound to show the guarantee of the system stability under the stochastic channel.  

The rest of this paper is organized as follows. The notation and preliminaries on stochastic hybrid systems are introduced in Section \ref{sec-preliminaries}. The stochastic protocol and the overall model are proposed in Section \ref{sec-emulation}. The properties of stochastic protocols are analyzed in Section \ref{sec-stochasticprotocol}. The main results are derived in Section \ref{sec-stability}. The application to linear systems with specific networks is given in Section \ref{sec-casestudy}. Two numerical examples are presented in Section \ref{sec-example}. Conclusion and future work are given in Section \ref{sec-conclusion}.

%%%%%%%%%%%%%%%%%%%%%%%%%%%%%%%%%%%%%%%%%%%%%%%%%%%%%%%%%%%%%%%%%%%%%%%%%%%%%%%%%%%%%%%%%%%%%%%%%%%%%%%%%%%%%%%%%%
\section{Preliminaries}
\label{sec-preliminaries}
%%%%%%%%%%%%%%%%%%%%%%%%%%%%%%%%%%%%%%%%%%%%%%%%%%%%%%%%%%%%%%%%%%%%%%%%%%%%%%%%%%%%%%%%%%%%%%%%%%%%%%%%%%%%%%%%%%

$\mathbb{R}:=(-\infty, +\infty); \mathbb{R}_{+}:=[0, +\infty); \mathbb{N}:=\{0, 1, \ldots\}$ and $\mathbb{N}_{+}:=\{1, 2, \ldots\}$. Given $x\in\mathbb{R}^{n}, y\in\mathbb{R}^{m}$, $(x, y):=(x^{\top}, y^{\top})^{\top}$ for simplicity. Given $\mathcal{A}, \mathcal{B}\subset\mathbb{R}^{n}$, $\mathcal{A}\setminus\mathcal{B}:=\{x\in\mathbb{R}^{n}: x\in\mathcal{A}, x\notin\mathcal{B}\}$ and $\mathcal{A}+\mathcal{B}:=\{x+y\in\mathbb{R}^{n}: x\in\mathcal{A}, y\in\mathcal{B}\}$. $I$ denotes the identity matrix. $|\cdot|$ denotes the Euclidean norm. For a closed set $\mathcal{A}\subset\mathbb{R}^{n}$ and $x\in\mathbb{R}^{n}$, $|x|_{\mathcal{A}}:=\inf_{y\in\mathcal{A}}|x-y|$ is the Euclidean distance of $x$ to $\mathcal{A}$. $\mathbb{B}$ and $\mathbb{B}^{\circ}$ denote respectively the closed and open unit ball in $\mathbb{R}^{n}$. $\pr\{\cdot\}$ denotes a probability measure. A set-valued mapping $\mathcal{M}: \mathbb{R}^{p}\rightrightarrows\mathbb{R}^{n}$ is \emph{outer semicontinuous} if, for each $(x_{i}, y_{i})\rightarrow(x, y)\in\mathbb{R}^{p}\times\mathbb{R}^{n}$ satisfying $y_{i}\in\mathcal{M}(x_{i})$ for all $i\in\mathbb{N}_{+}$, $y\in\mathcal{M}(x)$. A mapping $\mathcal{M}$ is \emph{locally bounded} if for each bounded set $\mathbb{K}\subset\mathbb{R}^{p}$, $\mathcal{M}(\mathbb{K}):=\cup_{x\in\mathbb{K}}\mathcal{M}(x)$ is bounded. Given a function $f: \mathbb{R}_{+}\rightarrow\mathbb{R}^{n}$, $\|f\|$ denotes the supremum norm on $[0, \infty)$; $f(t^{+}):=\limsup_{s\rightarrow0^{+}}f(t+s)$, and $f^{+}:=f(t^{+})$. Given a function $V: \mathbb{R}^{n}\rightarrow\mathbb{R}_{+}$ that is locally Lipschitz on an open set $\mathcal{O}\subset\mathbb{R}^{n}$, a point $x\in\mathcal{O}$ and a vector $f\in\mathbb{R}^{n}$, we denote by $V^{\circ}(x; f)$ the Clarke generalized directional derivative of $V$ at $x$ in the direction $f$. If $V$ is continuously differentiable, then $V^{\circ}(x; f)=\langle\nabla V(x), f\rangle$. Given a closed set $\mathcal{A}\subset\mathbb{R}^{n}$, $\rho: \mathbb{R}^{n}\rightarrow\mathbb{R}_{+}$ is of class $\mathcal{PD}(\mathcal{A})$, if it is continuous and for each $\delta>0$ and $\Delta>0$ there exists $\bar{\rho}>0$ such that $\rho(x)\geq\bar{\rho}$ for all $x\in(\mathcal{A}+\Delta\mathbb{B})\setminus(\mathcal{A}+\delta\mathbb{B}^{\circ})$. A function $\alpha: \mathbb{R}_{+}\to\mathbb{R}_{+}$ is of class $\mathcal{K}$, if it is continuous, zero at zero, and strictly increasing; $\alpha$ is of class $\mathcal{K}_{\infty}$ if it is of class $\mathcal{K}$ and unbounded. A function $\beta: \mathbb{R}_{+}\times\mathbb{R}_{+}\to\mathbb{R}_{+}$ is of class $\mathcal{KL}$, if $\beta(s, t)$ is of class $\mathcal{K}$ for fixed $t\geq0$ and decreases to zero as $t\rightarrow\infty$ for fixed $s\geq0$.

In this paper, we will model NCSs with random packet dropouts as the following stochastic hybrid system \citep{Teel2014stability, Subbaraman2017robust}:
\begin{subequations}
\label{eqn-1}
\begin{align}
\label{eqn-1a}
\dot{\xi}&\in\mathcal{F}(\xi), \quad \xi\in C, \\
\label{eqn-1b}
\xi^{+}&\in\mathcal{G}(\xi, \upsilon^{+}),  \quad \xi\in D, \\
\label{eqn-1c}
\upsilon&\sim\mu(\cdot),
\end{align}
\end{subequations}
where $\xi\in\mathbb{R}^{n}$ is the state, $\upsilon\in\mathbb{R}^{p}$ is the random input, and $C, D\subset\mathbb{R}^{n}$ are the flow and jump sets, respectively. $\mathcal{F}: \mathbb{R}^{n}\rightrightarrows\mathbb{R}^{n}$ and $\mathcal{G}: \mathbb{R}^{n}\times\mathbb{R}^{p}\rightrightarrows\mathbb{R}^{n}$ are the flow and jump maps, respectively. The distribution function $\mu$ is derived from the probability space $(\Omega, \mathfrak{F}, \pr)$ and a sequence of independent and identically distributed (i.i.d.) input variables $\mathbf{v}_{i}: \Omega\rightarrow\mathbb{R}^{p}$ defined on $(\Omega, \mathfrak{F}, \pr)$ for $i\in\mathbb{N}_{+}$. That is, $\mu$ is defined as $\mu(\mathcal{A}):=\pr\{\omega\in\Omega: \mathbf{v}_{i}(\omega)\in\mathcal{A}\}$ for any $\mathcal{A}\in\mathbf{B}(\mathbb{R}^{p})$. Let $\mathfrak{F}_{i}$ be the natural filtration of $\mathfrak{F}$ with respect to the random variables $\{\mathbf{v}_i\}_{i=1}^{\infty}$. That is, $\mathfrak{F}_{i}:=\sigma\{\mathbf{v}^{-1}_{j}(\mathcal{A}): 1\leq j\leq i, \mathcal{A}\in\mathfrak{F}\}$, which means the smallest $\sigma$-algebra on $(\Omega, \mathfrak{F})$ that contains the pre-images of $\mathbf{B}(\mathbb{R}^{p})$-measurable subsets on $\mathbb{R}^{p}$ for times up to $i$. It was shown in \citep{Subbaraman2017robust} that, if the following assumption holds, then the system \eqref{eqn-1} is well-posed to guarantee the  existence of nontrivial solutions \citep{Teel2013lyapunov}.

\begin{assumption}
\label{asp-1}
For the system \eqref{eqn-1}, the following holds.
\begin{enumerate}[(i)]
  \item The sets $C, D\subset\mathbb{R}^{n}$ are closed.\vspace{-5pt}
  \item The mapping $\mathcal{F}: \mathbb{R}^{n}\rightrightarrows\mathbb{R}^{n}$ is outer-semicontinuous, locally bounded with nonempty convex values on $C$.\vspace{-5pt}
  \item The mapping $\mathcal{G}: \mathbb{R}^{n}\times\mathbb{R}^{n}\rightrightarrows\mathbb{R}^{n}$ is locally bounded and the mapping $\upsilon\mapsto\graph(\mathcal{G}(\cdot, \upsilon)):=\{(x, y)\in\mathbb{R}^{2n}: y\in\mathcal{G}(x, \upsilon)\}$ is measurable with closed values.
\end{enumerate}
\end{assumption}

The notion of a solution to non-stochastic hybrid systems studied in \citep{Goebel2012hybrid} is introduced below. A \emph{compact hybrid time domain} is of the form $\cup^{J}_{j=0}([t_{j}, t_{j+1}]\times\{j\})\subset\mathbb{R}_{+}\times\mathbb{N}_{+}$ for some $J\in\mathbb{N}_{+}$ and real numbers $0=t_{0}\leq t_{1}\ldots\leq t_{J+1}$. A set $E\subset\mathbb{R}_{+}\times\mathbb{N}_{+}$ is a \emph{hybrid time domain}, if it is a union of a finite or infinite sequence of intervals $[t_{j}, t_{j+1}]\times\{j\}$, with the last interval (if existent) possibly of the form $[t_{j}, T)$ with $T$ finite or $T=\infty$. That is, a hybrid time domain is the union of a nondecreasing sequence of compact hybrid time domains. A mapping $\phi: E\rightarrow\mathbb{R}^{n}$ is a \emph{hybrid arc}, if $E$ is a hybrid time domain and for each $j\in\mathbb{N}_{+}$, the mapping $t\mapsto\phi(t, j)$ is locally absolutely continuous on $\{t: (t, j)\in E\}$.

\emph{A stochastic hybrid arc} is a mapping $\xi$ defined on $\Omega$ such that $\xi(\omega)$ is a hybrid arc for each $\omega\in\Omega$ and the set-valued mapping from $\Omega$ to $\mathbb{R}^{n+2}$ defined by $\omega\rightarrow\graph(\xi(\omega)):=\{(t, j, z): \phi=\xi(\omega), (t, j)\in\dom(\phi), z=\phi(t, j)\}$ is $\mathfrak{F}$-measurable with closed values. Define $\graph(\xi(\omega))_{\leq j}:=\graph(\xi(\omega))\cap(\mathbb{R}_{+}\times\{0, \ldots, j\}\times\mathbb{R}^{n})$. An $\{\mathfrak{F}_{j}\}^{\infty}_{j=0}$ adapted stochastic hybrid arc is a stochastic hybrid arc $\xi$ such that the mapping $\omega\rightarrow\graph(\xi(\omega))_{\leq j}$ is $\mathfrak{F}_{j}$-measurable for each $j\in\mathbb{N}$. \emph{An adapted stochastic hybrid arc} $\xi$ is a solution starting from $\zeta$, denoted as $\xi\in\mathcal{S}(\zeta)$, if $\xi(\omega)$ is a solution to \eqref{eqn-1} with inputs ${\mathbf{v}_{i}(\omega)}^{\infty}_{i=0}$, that is, with $\phi_{\omega}:=\xi(\omega)$,
\begin{enumerate}[1)]
  \item $\phi_{\omega}(0, 0)=\zeta$;\vspace{-5pt}
  
  \item for $(t_{1}, j), (t_{2}, j)\in\dom(\phi_{\omega})$ with $t_{1}<t_{2}$, we have $\phi_{\omega}(t, j)\in C$ and $\dot{\phi}_{\omega}(t, j)\in\mathcal{F}(\phi_{\omega}(t, j))$ for almost every $t\in[t_{1}, t_{2}]$; \vspace{-5pt}
      
  \item for $(t, j), (t, j+1)\in\dom(\phi_{\omega})$, we have $\phi_{\omega}(t, j)\in D$ and $\phi_{\omega}(t, j+1)\in\mathcal{G}(\phi_{\omega}(t, j), \mathbf{v}_{j+1}(\omega))$.
\end{enumerate}

A random solution is (pathwise) \emph{maximal} if for each sample path the domain of the sample path cannot be extended further; see \citep[Definition 2.7]{Goebel2012hybrid} for more details. A random solution is \emph{almost surely complete} if for almost every $\omega$, the sample path $\xi(\omega)$ is complete. A sample path is said to be \emph{complete} if the domain of the sample path is unbounded.

\begin{definition}[{\citep{Subbaraman2016recurrence}}]
\label{def-1}
A compact set $\mathcal{A}\subset\mathbb{R}^{n}$ is \emph{uniformly globally stable in probability (UGSp)} for the system \eqref{eqn-1}, if
\begin{enumerate}
  \item for each $\varepsilon>0$ and $\rho>0$, there exists $\delta>0$ such that $\pr\{\graph(\xi)\subset\mathbb{R}^{2}\times(\mathcal{A}+\varepsilon\mathbb{B}^{\circ})\}\geq1-\rho$ for all $\xi\in\mathcal{S}(\mathcal{A}+\delta\mathbb{B})$;\vspace{-5pt}
  \item for each $\delta>0$ and $\rho>0$, there exists $\varepsilon>0$ such that $\pr\{\graph(\xi)\subset\mathbb{R}^{2}\times(\mathcal{A}+\varepsilon\mathbb{B}^{\circ})\}\geq1-\rho$  for all $\xi\in\mathcal{S}(\mathcal{A}+\delta\mathbb{B})$.
\end{enumerate}
The set $\mathcal{A}$ is \emph{uniformly globally attractive in probability (UGAp)} for \eqref{eqn-1} if, for each $\varepsilon>0, \rho>0$ and $\delta>0$, there exists $\varrho>0$ such that $\pr\{(\graph(\xi)\cap(\Gamma_{\geq\varrho}\times\mathbb{R}^{n}))\subset\mathbb{R}^{2}\times(\mathcal{A}+\varepsilon\mathbb{B}^{\circ})\}\geq1-\rho$ for $\xi\in\mathcal{S}(\mathcal{A}+\delta\mathbb{B})$. The set $\mathcal{A}$ is \emph{uniformly globally asymptotically stable in probability (UGASp)} for \eqref{eqn-1} if, it is both UGSp and UGAp for \eqref{eqn-1}.
\end{definition}

%%%%%%%%%%%%%%%%%%%%%%%%%%%%%%%%%%%%%%%%%%%%%%%%%%%%%%%%%%%%%%%%%%%%%%%%%%%%%%%%%%%%%%%%%%%%%%%%%%%%%%%%%%%%%%%%%%
\section{Stochastic Hybrid Model for NCS}
\label{sec-emulation}
%%%%%%%%%%%%%%%%%%%%%%%%%%%%%%%%%%%%%%%%%%%%%%%%%%%%%%%%%%%%%%%%%%%%%%%%%%%%%%%%%%%%%%%%%%%%%%%%%%%%%%%%%%%%%%%%%%

In this section, we introduce a novel NCS setup by proposing a general stochastic scheduling protocol to capture the effects of random packet dropouts, and then embed this protocol within the stochastic framework of the form \eqref{eqn-1}.

%---------------------------------------------------------------------------------------------------------------------------------------------------------------------------------------------------------------------------------
\subsection{Networked Control Configuration}
%---------------------------------------------------------------------------------------------------------------------------------------------------------------------------------------------------------------------------------

Consider the following continuous-time plant
\begin{equation}
\label{eqn-2}
\dot{x}_{\p}=f_{\p}(x_{\p}, u),\quad y=g_{\p}(x_{\p}),
\end{equation}
where $x_{\p}\in\mathbb{R}^{n_{\p}}$ is the system state, $u\in\mathbb{R}^{n_{u}}$ is the control input, and $y\in\mathbb{R}^{n_{y}}$ is the system output. Following the emulation-based approach \citep{Wang2014emulated, Maass2019lp}, the first step is to consider the network-free case and to design a controller for \eqref{eqn-2} to guarantee the desired system performance. Here the designed controller is assumed to be of the following form
\begin{equation}
\label{eqn-3}
\dot{x}_{\ct}=f_{\ct}(x_{\ct}, y),\quad u=g_{\ct}(x_{\ct}),
\end{equation}
where $x_{\ct}\in\mathbb{R}^{n_{\ct}}$ is the controller state. The control input depends only on the controller state \citep{Nesic2004input, Hespanha2006stochastic, Wang2007robust, Heemels2010networked}, but can be extended to involve the system output; see, e.g., \citep{Donkers2011stability, Antunes2013stability, Liu2021event, Li2021output}. In \eqref{eqn-2}-\eqref{eqn-3}, $f_{\p}$ and $f_{\ct}$ are assumed to be continuous; $g_{\p}$ and $g_{\ct}$ are assumed to be continuously differentiable. The second step is to implement the designed controller \eqref{eqn-3} over the network and to guarantee that the assumed stability of the system \eqref{eqn-2}-\eqref{eqn-3} will be preserved under some reasonable assumptions.

Since the communication between the plant and controller is via the network, $y\in\mathbb{R}^{n_{y}}$ and $u\in\mathbb{R}^{n_{u}}$ are transmitted via the network. To facilitate the following notation and analysis, we denote by $\hat{y}\in\mathbb{R}^{n_{y}}$ and $\hat{u}\in\mathbb{R}^{n_{u}}$ the vectors of most recently transmitted plant and controller output values via the network. Accordingly, the network-induced error is $e:=(e_{y}, e_{u})\in\mathbb{R}^{n_{e}}$, where $e_{y}:=\hat{y}-y$, $e_{u}:=\hat{u}-u$ and $n_{e}:=n_{y}+n_{u}$.

%---------------------------------------------------------------------------------------------------------------------------------------------------------------------------------------------------------------------------------
\subsection{Communication Network and Protocol}
\label{subsec-protocol}
%---------------------------------------------------------------------------------------------------------------------------------------------------------------------------------------------------------------------------------

The communication network is limited-capacity and assumed to have $\ell\in\mathbb{N}_{+}$ nodes, which are groups of sensor or actuator signals that are transmitted together in a single packet. Accordingly, the error vector $e$ is partitioned as $e:=(e_{1}, \ldots, e_{\ell})$, where $e_{l}\in\mathbb{R}^{n^{l}_{e}}$ corresponds to the node $l\in\{1, \ldots, \ell\}$ and $\sum^{\ell}_{l=1}n^{l}_{e}=n^{l}_{e}$. At each transmission time $t_{i}\in\mathbb{R}_{+}$, $i\in\mathbb{N}_{+}$, one and only one node is granted access to the network to transmit its data packet. As assumed in \citep{Carnevale2007lyapunov, Nesic2004input1, Nesic2009unified}, all transmission times are required to be strictly increasing and to satisfy
\begin{equation}
\label{eqn-4}
\tau_{\miati}\leq t_{i+1}-t_{i}\leq\tau_{\mati}, \quad \forall i\in\mathbb{N},
\end{equation}
where, $\tau_{\miati}>0$ is called the \emph{minimally achievable transmission interval (MIATI)} and is given by the hardware constraints; $\tau_{\mati}>0$ is called the \textit{maximum allowable transmission interval (MATI)} and is used to measure how fast the network needs to transmit in order to preserve the system performances. Between two successive transmission times, zero-order-hold (ZOH) devices are assumed to be implemented such that $\dot{\hat{y}}=0$ and $\dot{\hat{u}}=0$.

\begin{remark}
One and only one node is granted access to the network at each transmission time, which is a common assumption in many existing works \citep{Walsh2002stability, Nesic2004input1, Antunes2013stability, Heemels2010networked}. This assumption comes from both the limited capacity of communication channels \citep{Nesic2009unified} and many realistic network protocols \citep{Park2009generalized, Park2013modeling}. If multiple nodes can transmit their data packets at each transmission time, then additional difficulties in system modelling and stability analysis need to be resolved, which is out of the scope of this paper and could be a potential topic for the future study. 
\hfill $\square$
\end{remark}

The communication network is generally considered to be reliable, and thus each node that is granted access to the network will transmit its data packet successfully. Such case is called the \emph{dropout-free case}. In this case, the update of the network-induced error is given by (see also \citep{Nesic2004input1})
\begin{align}
\label{eqn-5}
e(t^{+}_{i})=h(i, e(t_{i})),
\end{align}
where the function $h: \mathbb{N}\times\mathbb{R}^{n_{e}}\rightarrow\mathbb{R}^{n_{e}}$ depends only on the network protocol and is independent of the plant and controller dynamics. We usually call \eqref{eqn-5} the \emph{scheduling protocol}. However, the assumption of reliable communication networks does not always hold. Many random phenomena, including the channel access failure, packet collisions and the acknowledge (ACK) timeout \citep{Maass2019stabilization, Tabbara2008input, Park2013modeling}, may occur in the transmission process. Because of these random phenomena, not all data packets are transmitted successfully, thereby resulting in random packet dropouts. This phenomenon is called the \emph{random dropout case}, and cannot be described via the deterministic protocol \eqref{eqn-5}.

To model the random dropout case, we first introduce a time-homogenous binary variable vector $\upsilon_{l}:=(\upsilon_{l1}, \ldots, \upsilon_{lm})\in\{0, 1\}^{m}$ to each node, where $m\in\mathbb{N}_{+}$ and $l\in\{1, \ldots, \ell\}$. This variable vector $\upsilon_{l}$ is used to indicate if the packet dropout occurs when this node is granted to access to the network. Note that the packet dropout may be caused by different reasons, which are independent of each other \citep{Park2009generalized, Park2013modeling}. Hence, $\upsilon_{l}$ is presented into the vector form. Each component in $\upsilon_{l}$ implies a single reason for the packet dropout and is to record if the corresponding reason occurs. That is, if the $l$-th node is chosen, then $\upsilon_{lj}=0$ means the occurrence of the $j$-th reason, where $j\in\{1, \ldots, m\}$; otherwise, the $j$-th reason does not occur. If $\upsilon_{lj}=1$ for all $j\in\{1, \ldots, m\}$, then the $l$-th node transmits its data packet successfully. All these variables are assumed to be independent of each other, and $\sum^{\ell}_{l=1}\sum^{m}_{j=1}\upsilon_{lj}\leq m$. That is, at each transmission time, there exists at most one node being chosen and transmitting its packet successfully. In particular, $\sum^{\ell}_{l=1}\sum^{m}_{j=1}\upsilon_{lj}\neq m$ implies the transmission failure due to the packet dropout caused by at least one reason. Let $\pr\{\upsilon_{lj}=1\}=p_{lj}\in(0, 1)$ be the occurrence probability of the $j$-th reason for the $l$-th node. 
Based on the probabilities of all binary variables, we have the probabilities $P_{\mathrm{s}}, P_{\mathrm{f}}\in(0, 1)$ for the transmission success and failure, respectively. The next example is presented to show the generality and flexibility of the proposed random dropout modelling.

\begin{example}
The following three different cases are presented, which include many existing cases, e.g., in \citep{Tabbara2008input, Maass2019stabilization, Quevedo2012robust, Schenato2007foundations, Wang2007robust}.
\begin{itemize}
  \item For the carrier sense multiple access with collision avoidance (CSMA/CA) mechanism of IEEE 802.15.4 (see \citep{Park2009generalized} for more details), there exist two reasons for packet dropouts: busy channel and packet collision. Hence, $\upsilon_{l}:=(\upsilon_{l1}, \upsilon_{l2})\in\{0, 1\}^{2}$ for each node. Let $\pr\{\upsilon_{l1}=1\}=p_{1}\in(0, 1)$ and $\pr\{\upsilon_{l2}=1\}=p_{2}\in(0, 1)$, which implies that the occurrence probabilities of these two reasons are the same for all nodes. The probabilities of the transmission success and failure at each node are
      \begin{align*}
      P^{l}_{\mathrm{f}}=1-p_{1}p_{2}, \quad P^{l}_{\mathrm{s}}=p_{1}p_{2}.
      \end{align*}
      Hence, for each transmission, the probabilities of transmission failure and success are
      \begin{align*}
      P_{\mathrm{f}}=(1-p_{1}p_{2})(p_{1}p_{2})^{\ell-1}, \quad  P_{\mathrm{s}}=1-P_{\mathrm{f}}.
      \end{align*}\vspace{-20pt}

  \item Let $\upsilon_{l}\in\{0, 1\}$ and the packet dropouts at different transmission times be independent of each other. The probabilities of the transmission success and failure at each node are assumed to be
      \begin{align*}
      \pr\{\upsilon_{l}=1\}=p_{l}\in(0, 1), \quad \pr\{\upsilon_{l}=0\}=1-p_{l}\in(0, 1).
      \end{align*}
      For each transmission, the probabilities of transmission failure and success are
      \begin{align*}
      P_{\mathrm{f}}=\prod_{k\in\{1, \ldots, \ell\}, k\neq l}(1-p_{l})p_{k}, \quad  P_{\mathrm{s}}=1-P_{\mathrm{f}}.
      \end{align*}\vspace{-20pt}

  \item If the packet dropout at the currently-chosen node is related to the case at the previously-chosen node, then the transmission process can be characterized via a time-homogeneous Markov process \citep{Quevedo2012robust}. That is, for each node $l\in\{1, \ldots, \ell\}$, the transition probabilities are given by
      \begin{align}
      \label{eqn-6}
      \begin{aligned}
      \pr\{\upsilon_{l}(t_{i+1})=1|\upsilon_{k}(t_{i})=1\}&=q_{l}, \\
      \pr\{\upsilon_{l}(t_{i+1})=0|\upsilon_{k}(t_{i})=1\}&=1-q_{l}, \\
      \pr\{\upsilon_{l}(t_{i+1})=1|\upsilon_{k}(t_{i})=0\}&=p_{l}, \\
      \pr\{\upsilon_{l}(t_{i+1})=0|\upsilon_{k}(t_{i})=0\}&=1-p_{l},
      \end{aligned}
      \end{align}
      where $k\in\{1, \ldots, \ell\}$ is the chosen node at $t_{i}$, $q_{l}\in(0, 1)$ is the \emph{success rate} to the $l$-th node, and $p_{l}\in(0, 1)$ is the \emph{recovery rate} to the $l$-th node. Since the node to be chosen is not related to the previously-chosen node and may not be uniform, we can see that the rates in \eqref{eqn-6} depend on the node to be chosen instead of the previously-chosen node. In addition, different nodes are allowed to have different probabilities of packet dropouts, which is embedded in the rates of \eqref{eqn-6}. For instance, $q_{l}$ can be written as $\hat{q}_{l}\check{q}_{l}$ with $\hat{q}_{l}\in(0, 1)$ being the probability of choosing the $l$-th node and $\check{q}_{l}\in(0, 1)$ being the probability of transmission success via the $l$-th node. Based on the stationary distribution of this Markov process, the probabilities of transmission failure and success can be computed respectively.
      \hfill $\lhd$
\end{itemize}
\end{example}

Because of random packet dropouts, a novel update of the network-induced error is proposed below:
\begin{align}
\label{eqn-7}
\begin{aligned}
e(t^{+}_{i})&\in\{(I-\mathscr{S}(\upsilon(t^{+}_{i})))e(t_{i})\}+\mathscr{H}(\mathfrak{h}(i, e(t_{i})), \mathscr{S}(\upsilon(t^{+}_{i}))),   \\
\upsilon&\sim\mu(\cdot),
\end{aligned}
\end{align}
where $\upsilon:=(\upsilon_{1}, \ldots, \upsilon_{\ell})\in\{0, 1\}^{m\ell}$, $\mathscr{S}(\upsilon):=\diag(\bar{\upsilon}_{1}I_{1}, \ldots, \bar{\upsilon}_{\ell}I_{\ell})$ with the identity matrix $I_{l}$ corresponding to $e_{l}$, and $\bar{\upsilon}_{l}:=\min_{j\in\{1, \ldots, m\}}\{\upsilon_{lj}\}$. The distribution function $\mu$ is from the probability space $(\Omega, \mathfrak{F}, \pr)$ and the sequence of i.i.d. variables $\mathbf{v}_{i}: \Omega\rightarrow\mathbb{R}^{\ell}$ with $i\in\mathbb{N}_{+}$. More precisely, $\mu(\mathcal{A})=\pr\{\omega\in\Omega: \mathbf{v}_{i}(\omega)\in\mathcal{A}\}$ with $\mathcal{A}\subseteq\{0, 1\}^{m\ell}$. For instance, if $\mathcal{A}=\{0\}^{m\ell}$, then $\mu(\mathcal{A})=\prod_{l\in\{1, \ldots, \ell\}}\prod_{j\in\{1, \ldots, m\}}\pr\{\upsilon_{lj}=0\}$. In \eqref{eqn-7}, the mapping $\mathscr{H}: \mathbb{R}^{n_{e}}\times\{0, 1\}^{m\ell}\rightarrow\mathbb{R}^{n_{e}}$ is assumed to be continuous, while the mapping $\mathfrak{h}: \mathbb{N}\times\mathbb{R}^{n_{e}}\rightrightarrows\mathbb{R}^{n_{e}}$ is assumed to be locally bounded and outer semicontinuous. Here $\mathfrak{h}$ is set-valued so that either possibility for transmission is allowed. In this way, $\mathfrak{h}$ is not required to be continuous. If $\mathfrak{h}$ is reduced to be single-valued, then the assumptions on $\mathfrak{h}$ are equivalent to the continuity requirement; see \citep[Corollary 5.20]{Rockafellar2009variational}. In this case, $\mathfrak{h}$ is similar to but not necessarily the same as the function $h$ in \eqref{eqn-5}. In addition, the mapping $\mathscr{H}$ is to show the coupling between $\mathfrak{h}$ and $\upsilon$. The introduction of $\mathscr{H}$ comes from the nature of some stochastic channels and the potential involvement of the random variable into the complex update mechanism. In Section \ref{sec-casestudy}, we will present some applications to illustrate $\mathfrak{h}$ and $\mathscr{H}$.

Similar to the deterministic case in \citep{Nesic2004input1}, we call \eqref{eqn-7} the \emph{stochastic scheduling protocol}. Compared with the deterministic one \eqref{eqn-5}, the protocol \eqref{eqn-7} shows the effects of random packet dropouts, which correspond to the case $\mathscr{S}(\upsilon)=0$. In this case, 
\begin{equation*}
e(t^{+}_{i})\in\{e(t_{i})\}+\mathscr{H}(\mathfrak{h}(i, e(t_{i})), 0),
\end{equation*}
which is more general than the setting $e(t^{+}_{i})=e(t_{i})$, since packet dropouts may cause other effects on $e$ such that $e(t^{+}_{i})$ does not equal to $e(t_{i})$. The transmission success case is the same as the dropout-free case, and thus $e(t^{+}_{i})=h(i, e(t_{i}))$. In this case, 
\begin{align}
\label{eqn-8}
h(i, e(t_{i}))\in\{(I-S_{l})e(t_{i})\}+\mathscr{H}(\mathfrak{h}(i, e(t_{i})), S_{l}), 
\end{align}
where $l\in\{1, \ldots, \ell\}$ and $S_{l}:=\diag(I_{1}, \ldots, I_{l-1}, 0, I_{l+1}, \ldots, I_{\ell})$ means that the $l$-th node is chosen at the transmission time $t_{i}$ and transmits the data packet successfully. The proposed protocol \eqref{eqn-7} will be further addressed in Section \ref{sec-stochasticprotocol}.

\begin{remark}
The phenomena of packet dropouts have been studied in many works \citep{Schenato2007foundations, Wang2007robust, Maass2019stabilization, Tabbara2008input}, whereas in this work a unified protocol is proposed to show explicitly the effects of packet dropouts on the network transmission. The protocol \eqref{eqn-7} includes these in \citep{Schenato2007foundations, Wang2007robust, Maass2019stabilization, Tabbara2008input} as special cases. In particular, comparing with \citep{Antunes2013stability, Maass2019stabilization, Tabbara2008input} where the sequence of transmission intervals is assumed to be i.i.d., the transmission intervals are bounded in \eqref{eqn-4}, which is a realistic assumption for many network protocols. For instance, the transmission intervals are bounded for IEEE 802.11 MAC protocol (see \citep[Lemma 1]{Cali1998ieee}) and IEEE 802.15.4 network (see \citep{Park2013modeling}).
\hfill $\square$
\end{remark}

%---------------------------------------------------------------------------------------------------------------------------------------------------------------------------------------------------------------------------------
\subsection{Stochastic Hybrid Model}
\label{subsec-model}
%---------------------------------------------------------------------------------------------------------------------------------------------------------------------------------------------------------------------------------

With the proposed protocol \eqref{eqn-7}, NCS with random packet dropouts can be modeled as follows:
\begin{align}
\label{eqn-9}
\begin{aligned}
\dot{x}&=f(x, e), \quad t\in[t_{i}, t_{i+1}], \\
\dot{e}&=g(x, e), \quad t\in[t_{i}, t_{i+1}],  \\
e(t^{+}_{i})&\in\{(I-\mathscr{S}(\upsilon(t^{+}_{i})))e(t_{i})\}+\mathscr{H}(\mathfrak{h}(i, e(t_{i})), \mathscr{S}(\upsilon(t^{+}_{i}))),   \\
\upsilon&\sim\mu(\cdot),
\end{aligned}
\end{align}
where $x:=(x_{\p}, x_{\ct})\in\mathbb{R}^{n_{\p}+n_{\ct}}$. The vector fields $f, g$ are obtained by direct calculations from \eqref{eqn-2}-\eqref{eqn-3}; see \citep{Nesic2004input1} for more details.

To transform \eqref{eqn-9} into a formal stochastic hybrid model of the form \eqref{eqn-1}, we introduce two variables (see also \citep{Carnevale2007lyapunov}): the clock variable $\tau\in\mathbb{R}_{+}$ to generate jumps in the hybrid model that correspond to data transmissions, and the variable $\kappa\in\mathbb{N}_{+}$ to record the number of transmissions that is needed to implement certain protocols. Let $\xi:=(x, e, \tau, \kappa)\in\mathbb{R}^{n_{x}}\times\mathbb{R}^{n_{e}}\times\mathbb{R}_{+}\times\mathbb{N}$, and \eqref{eqn-9} is rephrased as the following stochastic hybrid system:
\begin{equation}
\label{eqn-10}
\begin{aligned}
\dot{\xi}&=\mathcal{F}(\xi), &\quad& \xi\in C, \\
\xi^{+}&\in\mathcal{G}(\xi, \upsilon^{+}),  &\quad& \xi\in D,  \\
\upsilon&\sim\mu(\cdot), &\quad &
\end{aligned}
\end{equation}
where $\upsilon\in\{0, 1\}^{m\ell}$, and
\begin{equation}
\label{eqn-11}
\begin{aligned}
C&:=\mathbb{R}^{n_{x}}\times\mathbb{R}^{n_{e}}\times[0, \tau_{\mati}]\times\mathbb{N}, \\
D&:=\mathbb{R}^{n_{x}}\times\mathbb{R}^{n_{e}}\times[\tau_{\miati}, \tau_{\mati}]\times\mathbb{N}.
\end{aligned}
\end{equation}
The mappings $\mathcal{F}$ and $\mathcal{G}$ in \eqref{eqn-10} are defined as
\begin{align}
\label{eqn-12}
\mathcal{F}(\xi)&=(f(x, e), g(x, e), 1, 0), \\
\label{eqn-13}
\mathcal{G}(\xi, \upsilon)&=(x, \{(I-\mathscr{S}(\upsilon))e\}+\mathscr{H}(\mathfrak{h}(\kappa, e), \mathscr{S}(\upsilon)), 0, \kappa+1),
\end{align}
where $\mathfrak{h}$ and $\mathscr{H}$ are given in \eqref{eqn-7}. The following proposition shows the well-posedness of the system \eqref{eqn-10}.

\begin{proposition}
\label{prop-1}
The system \eqref{eqn-10} satisfies Assumption \ref{asp-1}.
\end{proposition}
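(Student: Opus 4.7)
The plan is to verify, in turn, each of the three items of Assumption \ref{asp-1} for the data $(C,D,\mathcal{F},\mathcal{G})$ defined in \eqref{eqn-11}--\eqref{eqn-13}. Most of the work is a bookkeeping exercise that reduces to the regularity assumptions already imposed on $f_{\p}, f_{\ct}, g_{\p}, g_{\ct}$ in Section \ref{subsec-protocol} and on $\mathfrak{h}$ and $\mathscr{H}$ in \eqref{eqn-7}.

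For item (i), I would simply observe that $C$ and $D$ are finite Cartesian products of closed sets: $\mathbb{R}^{n_{x}}$ and $\mathbb{R}^{n_{e}}$ are closed, $[0,\tau_{\mati}]$ and $[\tau_{\miati},\tau_{\mati}]$ are closed intervals, and $\mathbb{N}\subset\mathbb{R}$ is closed. Hence $C$ and $D$ are closed. For item (ii), the map $\mathcal{F}$ is single-valued, so its values are automatically nonempty and convex. Since $f_{\p},f_{\ct}$ are continuous and $g_{\p},g_{\ct}$ are continuously differentiable, the composite fields $f$ and $g$ entering \eqref{eqn-9} are continuous; together with the constant components $1$ and $0$ in \eqref{eqn-12}, $\mathcal{F}$ is a continuous single-valued map and is therefore outer semicontinuous and locally bounded on $C$.

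Item (iii) is where I would spend most of the argument. I split it into three sub-claims. \emph{Local boundedness of $\mathcal{G}$:} fix a bounded set $\mathbb{K}$ in the joint $(\xi,\upsilon)$-space. The identity $x$-component and the counter update $\kappa\mapsto\kappa+1$ are obviously bounded on $\mathbb{K}$; $\upsilon$ lives in the finite set $\{0,1\}^{m\ell}$ so $\mathscr{S}(\upsilon)$ is automatically bounded; and since $\mathfrak{h}$ is locally bounded by assumption and $\mathscr{H}$ is continuous (hence bounded on bounded sets), the composite $\mathscr{H}(\mathfrak{h}(\kappa,e),\mathscr{S}(\upsilon))$ is bounded on $\mathbb{K}$. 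Summing the pieces yields a bounded image. \emph{Closed-graph of $\mathcal{G}(\cdot,\upsilon)$:} the first, third and fourth components are continuous in $\xi$; the second component is the sum of the affine map $e\mapsto (I-\mathscr{S}(\upsilon))e$ and the composition $\mathscr{H}(\mathfrak{h}(\kappa,\cdot),\mathscr{S}(\upsilon))$. Since $\mathfrak{h}$ is outer semicontinuous and locally bounded (so its graph is closed) and $\mathscr{H}$ is continuous in its first argument, pushing any convergent sequence $(\xi_{i},y_{i})\to(\xi,y)$ with $y_{i}\in\mathcal{G}(\xi_{i},\upsilon)$ through a subsequence that selects convergent representatives in $\mathfrak{h}(\kappa_{i},e_{i})$ gives $y\in\mathcal{G}(\xi,\upsilon)$, i.e.\ $\graph(\mathcal{G}(\cdot,\upsilon))$ is closed. \emph{Measurability in $\upsilon$:} because $\{0,1\}^{m\ell}$ is a finite, hence discrete, set, every set-valued map from it is trivially Borel-measurable, so the measurability requirement on $\upsilon\mapsto\graph(\mathcal{G}(\cdot,\upsilon))$ is immediate.

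The main delicate point is the closed-graph step: one has to be careful in composing the outer semicontinuous set-valued map $\mathfrak{h}$ with the continuous map $\mathscr{H}$ and then adding the singleton $(I-\mathscr{S}(\upsilon))e$, since naive sums of closed-graph set-valued maps need not be closed-graph without a local boundedness assumption. Local boundedness of $\mathfrak{h}$ (and therefore of the composite) is exactly what rescues the argument, and this is why the hypotheses on $\mathfrak{h}$ in \eqref{eqn-7} are phrased as ``locally bounded and outer semicontinuous'' rather than merely continuous-in-the-single-valued-sense. Everything else is direct verification.
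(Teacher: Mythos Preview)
Your proposal is correct and follows essentially the same route as the paper's proof: verify the three items of Assumption \ref{asp-1} in order, with item (iii) being the only nontrivial one. The only cosmetic difference is that the paper packages the closed-graph/local-boundedness argument for the composite $\mathscr{H}(\mathfrak{h}(\kappa,e),\mathscr{S}(\upsilon))$ by citing standard results from \citep{Rockafellar2009variational} (Corollary 5.20 for continuous single-valued maps and Proposition 5.52 for compositions of outer-semicontinuous, locally bounded maps), whereas you spell out the subsequence argument by hand; your explicit remark that local boundedness of $\mathfrak{h}$ is what makes the sum-and-compose step work is exactly the content of those citations.
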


\begin{proof}
From \eqref{eqn-11}, we can see that the sets $C$ and $D$ are closed, which validates the first item of Assumption \ref{asp-1}. Since the functions $f$ and $g$ in \eqref{eqn-12} are continuous, the mapping $\mathcal{F}$ is continuous, which implies from \citep[Corollary 5.20]{Rockafellar2009variational} that the mapping $\mathcal{F}$ is outer semicontinuous and locally bounded. Hence, the second item of Assumption \ref{asp-1} is verified. 
To show the third item, we define $\mathcal{G}$ as $\varnothing$ on $((\mathbb{R}^{n_{x}}\times\mathbb{R}^{n_{e}}\times\mathbb{R}_{+}\times\mathbb{N})\setminus D)\times\mathcal{V}$ with $\mathcal{V}:=\{0, 1\}^{m\ell}$. The power set of $\mathcal{V}$ is denoted by $2^{\mathcal{V}}$. Since $\mathcal{V}$ is finite and closed, $2^{\mathcal{V}}$ is a $\sigma$-algebra and $(\mathcal{V}, 2^{\mathcal{V}})$ is a measurable space. Given an open set $\mathcal{O}\subset(\mathbb{R}^{n_{x}}\times\mathbb{R}^{n_{e}}\times\mathbb{R}_{+}\times\mathbb{N})^{2}$, we have $\{\upsilon\in\mathcal{V}: \graph(\mathcal{G}(\cdot, \upsilon))\cap\mathcal{O}\neq\varnothing\}\in2^{\mathcal{V}}$, which implies that the set $\{\upsilon\in\mathcal{V}: \graph(\mathcal{G}(\cdot, \upsilon))\cap\mathcal{O}\neq\varnothing\}$ is measurable. From \citep[Appendix A.2]{Teel2013lyapunov}, the mapping $\upsilon\rightarrow\graph(\mathcal{G}(\cdot, \upsilon))$ is measurable. Note that the mapping $\mathscr{H}$ is continuous, and from \citep[Corollary 5.20]{Rockafellar2009variational} it is outer semicontinuous and locally bounded. From \citep[Proposition 5.52]{Rockafellar2009variational}, for any fixed $\upsilon\in\mathcal{V}$, $\mathscr{H}(\mathfrak{h}(\kappa, e), \mathscr{S}(\upsilon))$ is outer semicontinuous and locally bounded. Hence, the mapping $\mathcal{G}$ is outer semicontinuous and locally bounded for any fixed $\upsilon\in\mathcal{V}$. That is, the third item of Assumption \ref{asp-1} is valid. As a result, we conclude that the model \eqref{eqn-10} satisfies Assumption \ref{asp-1}.
\hfill $\blacksquare$
\end{proof}

%%%%%%%%%%%%%%%%%%%%%%%%%%%%%%%%%%%%%%%%%%%%%%%%%%%%%%%%%%%%%%%%%%%%%%%%%%%%%%%%%%%%%%%%%%%%%%%%%%%%%%%%%%%%%%%%%%
\section{Stochastic Scheduling Protocols}
\label{sec-stochasticprotocol}
%%%%%%%%%%%%%%%%%%%%%%%%%%%%%%%%%%%%%%%%%%%%%%%%%%%%%%%%%%%%%%%%%%%%%%%%%%%%%%%%%%%%%%%%%%%%%%%%%%%%%%%%%%%%%%%%%%

In this section the property of the stochastic protocol \eqref{eqn-7} is investigated. The protocol \eqref{eqn-7} shows the mapping relation of the network-induced errors before and after the transmission time $t_{i}\in\mathbb{R}_{+}$. Since the network-induced error may evolve in the continuous-time interval and \eqref{eqn-7} does not show the relation between the network-induced errors at any two successive transmission times, \eqref{eqn-7} cannot be treated as a stochastic discrete-time system. In order to address the property of \eqref{eqn-7}, we introduce the following auxiliary stochastic discrete-time system:
\begin{align}
\label{eqn-14}
\begin{aligned}
e(i+1)&\in\mathbf{h}(i, e(i), \upsilon(i+1)) \\
&=\{(I-\mathscr{S}(\upsilon(i+1)))e(i)\}+\mathscr{H}(\mathfrak{h}(i, e(i)), \mathscr{S}(\upsilon(i+1))),  \\
\upsilon&\sim\mu(\cdot).
\end{aligned}
\end{align}
Note that $i\in\mathbb{N}$ plays the same role as the discrete-time instant as in \citep{Teel2013matrosov} and can be ignored as in \citep{Teel2013equivalent} if no confusion will be caused. Similar to the deterministic case \citep{Nesic2004input1}, we refer to \eqref{eqn-14} simply as a protocol with a slight abuse of terminology.

As stated in Section \ref{subsec-protocol}, the introduction of the mappings $\mathfrak{h}$ and $\mathscr{H}$ is such that the stochastic protocol \eqref{eqn-14} is as general as possible. Due to the term $(I-\mathscr{S}(\upsilon))e$, $\mathfrak{h}$ is not necessarily the same as the function $h$ in \eqref{eqn-5}. Since $\mathfrak{h}$ is not specified, $\mathscr{H}$ is introduced to show the coupling between $\mathfrak{h}$ and the random variable. To show this, two special cases of $\mathscr{H}$ are presented below. For each chosen node, if the update of its network-induced error does not affect the network-induced errors of other nodes, then $\mathfrak{h}$ is reduced to the function $h$ in \eqref{eqn-5} and
\begin{align}
\label{eqn-15}
\mathscr{H}(\mathfrak{h}(i, e), \mathscr{S}(\upsilon))=\mathscr{S}(\upsilon)\mathfrak{h}(i, e).
\end{align}
If the mapping $\mathfrak{h}$ is of the form $Q(i, e)e$ as in \citep{Maass2019stabilization, Tabbara2008input}, where the matrix $Q(i, e)$ is related to $(i, e)$, then
\begin{align}
\label{eqn-16}
\mathscr{H}(\mathfrak{h}(i, e), \mathscr{S}(\upsilon))=\mathbf{F}(Q(i, e), \mathscr{S}(\upsilon))e,
\end{align}
where the mapping $\mathbf{F}: \mathbb{R}^{n_{e}\times n_{e}}\times\{0, 1\}^{m\ell}\rightarrow\mathbb{R}^{n_{e}\times n_{e}}$ shows the coupling between the matrix $Q(i, e)$ and the random variable. Therefore, we can see that the proposed protocol \eqref{eqn-14} is general enough to cover many existing cases. Once $\mathscr{H}$ is determined, the mapping $\mathbf{h}$ in \eqref{eqn-14} is also determined. The case studies of some realistic networks are presented in detail in Section \ref{sec-casestudy}.

%----------------------------------------------------------------------------------------------------------------------------------------------------------------------------------------------
\subsection{Lyapunov UGASp Scheduling Protocols}
%----------------------------------------------------------------------------------------------------------------------------------------------------------------------------------------------

To investigate the properties of the protocol \eqref{eqn-14}, we first propose the definition of uniform global asymptotic stability in probability, which is central to the subsequent stability analysis.

\begin{definition}
\label{def-2}
Consider the protocol \eqref{eqn-14}. If there exist a function $W: \mathbb{N}_{+}\times\mathbb{R}^{n_{e}}\rightarrow\mathbb{R}_{+}$, $\alpha_{1W}, \alpha_{2W}\in\mathcal{K}_{\infty}$ and $\bar{\rho}\in[0, 1)$ such that for all $(i, e)\in\mathbb{N}_{+}\times\mathbb{R}^{n_{e}}$,
\begin{subequations}
\label{eqn-17}
\begin{align}
\label{eqn-17a}
\alpha_{1W}(|e|)\leq W(i, e)&\leq\alpha_{2W}(|e|),   \\
\label{eqn-17b}
\int_{\upsilon\in\mathcal{V}}W(i+1, \mathbf{h}(i, e(i), \upsilon))\mu(d\upsilon)&\leq\bar{\rho}W(i, e),
\end{align}
\end{subequations}
then the protocol \eqref{eqn-14} is \textit{globally asymptotically stable in probability (UGASp)} with the Lyapunov function $W$.
\end{definition}

\begin{remark}
Definition \ref{def-2} is related to neither the plant \eqref{eqn-2} nor the controller \eqref{eqn-3}, and it reveals intrinsic properties of the protocol \eqref{eqn-14} itself. Hence, the UGASp protocol provides a novel analysis framework to deal with random packet dropouts in terms of stochastic protocols. On the other hand, Definition \ref{def-2} extends UGAS protocols for the deterministic case \citep{Nesic2004input} to the stochastic case, and includes almost surely Lyapunov UGAS protocols \citep{Tabbara2008input} as a special case.
\hfill $\square$
\end{remark}

For the protocol \eqref{eqn-14}, the following assumption is made, which is based on deterministic protocols \citep{Nesic2004input} and the construction of Lyapunov functions in \citep[Section V]{Heemels2010networked}.

\begin{assumption}
\label{asp-2}
There exist a locally Lipschitz function $W: \mathbb{N}_{+}\times\mathbb{R}^{n_{e}}\rightarrow\mathbb{R}_{+}$, $\alpha_{1W}, \alpha_{2W}\in\mathcal{K}_{\infty}$ and $\lambda\geq1, \rho\in[0, 1)$ such that for all $(i, e)\in\mathbb{N}_{+}\times\mathbb{R}^{n_{e}}$,
\begin{subequations}
\label{eqn-18}
\begin{align}
\label{eqn-18a}
\alpha_{1W}(|e|)\leq W(i, e)&\leq\alpha_{2W}(|e|), \\
\label{eqn-18b}
W(i+1, h(i, e))&\leq\rho W(i, e),  \\
\label{eqn-18c}
W(i+1, e+\mathscr{H}(\mathfrak{h}(i, e), 0))&\leq\lambda W(i, e). 
\end{align}
\end{subequations}
\end{assumption}

In Assumption \ref{asp-2}, \eqref{eqn-18a}-\eqref{eqn-18b} are the same as the conditions for Lyapunov UGAS protocols of deterministic protocols \citep{Nesic2004input, Carnevale2007lyapunov}, and \eqref{eqn-18c} is for the case of the transmission failure. If $W$ is the same as the one for Lyapunov UGAS protocols, then $\lambda$ in \eqref{eqn-18c} equals to 1 simply such that \eqref{eqn-18c} always holds. In this way, the existence of the function $W$ can be verified via Lyapunov UGAS protocols for the deterministic case.

Under Assumption \ref{asp-2}, the following proposition is derived, which shows the UGASp property of the protocol \eqref{eqn-14} if the probabilities of transmission failure and success satisfy certain conditions. This proposition extends Proposition 5.3 in \citep{Tabbara2008input} due to the novel condition \eqref{eqn-18c} for the packet dropout case.

\begin{proposition}
\label{prop-2}
Consider the protocol \eqref{eqn-14}, and let Assumption \ref{asp-2} hold. The protocol \eqref{eqn-14} is UGASp, if
\begin{align}
\label{eqn-19}
\bar{\rho}:=\lambda P_{\mathrm{f}}+\rho P_{\mathrm{s}}<1,
\end{align}
where $P_{\mathrm{f}}$ and $P_{\mathrm{s}}$ are respectively the probabilities of transmission failure and success.
\end{proposition}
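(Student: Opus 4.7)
The plan is to verify the integral contraction condition \eqref{eqn-17b} of Definition \ref{def-2} directly, using the same Lyapunov function $W$ supplied by Assumption \ref{asp-2}. Condition \eqref{eqn-17a} is already guaranteed by \eqref{eqn-18a}, so the entire proof reduces to bounding $\int_{\upsilon\in\mathcal{V}}W(i+1,\mathbf{h}(i,e,\upsilon))\mu(d\upsilon)$ by $\bar{\rho}W(i,e)$.

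First, I would partition the sample space $\mathcal{V}=\{0,1\}^{m\ell}$ into two measurable events according to the value of $\mathscr{S}(\upsilon)$: the \emph{failure event} $\mathcal{V}_{\mathrm{f}}:=\{\upsilon\in\mathcal{V}:\mathscr{S}(\upsilon)=0\}$, whose $\mu$-measure is precisely $P_{\mathrm{f}}$ by construction, and the \emph{success event} $\mathcal{V}_{\mathrm{s}}:=\mathcal{V}\setminus\mathcal{V}_{\mathrm{f}}$, whose $\mu$-measure is $P_{\mathrm{s}}=1-P_{\mathrm{f}}$. The success event further decomposes into the disjoint union $\bigcup_{l=1}^{\ell}\{\mathscr{S}(\upsilon)=S_{l}\}$, since only one node can transmit successfully at any given time (see Section \ref{subsec-protocol}).

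On $\mathcal{V}_{\mathrm{f}}$, the update reduces to $\mathbf{h}(i,e,\upsilon)=e+\mathscr{H}(\mathfrak{h}(i,e),0)$, so \eqref{eqn-18c} yields $W(i+1,\mathbf{h}(i,e,\upsilon))\leq\lambda W(i,e)$ for every selection from the set-valued map. On $\mathcal{V}_{\mathrm{s}}$, when $\mathscr{S}(\upsilon)=S_{l}$, the update is $\mathbf{h}(i,e,\upsilon)=(I-S_{l})e+\mathscr{H}(\mathfrak{h}(i,e),S_{l})$, which by \eqref{eqn-8} coincides with $h(i,e)$ for the deterministic protocol when node $l$ is chosen, so \eqref{eqn-18b} gives $W(i+1,\mathbf{h}(i,e,\upsilon))\leq\rho W(i,e)$. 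Splitting the integral and applying these two pointwise bounds produces
\begin{equation*}
\int_{\mathcal{V}}W(i+1,\mathbf{h}(i,e,\upsilon))\mu(d\upsilon)\leq\lambda P_{\mathrm{f}}W(i,e)+\rho P_{\mathrm{s}}W(i,e)=\bar{\rho}W(i,e),
\end{equation*}
and the hypothesis $\bar{\rho}<1$ completes the verification of Definition \ref{def-2}.

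I expect the main subtlety to lie in handling the set-valued mapping $\mathfrak{h}$ and the accompanying map $\mathscr{H}$: the Lyapunov inequalities in Assumption \ref{asp-2} must be interpreted as holding for every selection, and one must ensure that $\mathbf{h}$ remains measurable in $\upsilon$ so that the integrals are well defined. Outer semicontinuity and local boundedness of $\mathfrak{h}$, together with continuity of $\mathscr{H}$, already guarantee the measurability required (as exploited in the proof of Proposition \ref{prop-1}). The only other care point is that \eqref{eqn-18b} must be read as valid for the node-dependent successful update on each branch of $\mathcal{V}_{\mathrm{s}}$, which is consistent with the interpretation of $h$ in the deterministic Lyapunov UGAS literature \citep{Nesic2004input, Carnevale2007lyapunov}.
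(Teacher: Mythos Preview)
Your proposal is correct and follows essentially the same approach as the paper: the paper's proof also notes that \eqref{eqn-17a} is immediate from \eqref{eqn-18a} and then splits the integral according to the two transmission outcomes, applying \eqref{eqn-18b} on the success event and \eqref{eqn-18c} on the failure event to obtain $(\lambda P_{\mathrm{f}}+\rho P_{\mathrm{s}})W(i,e)$. Your explicit decomposition of $\mathcal{V}$ into $\mathcal{V}_{\mathrm{f}}$ and $\mathcal{V}_{\mathrm{s}}$ and your remarks on measurability and set-valued selections make the argument more detailed, but the underlying logic is identical.
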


\begin{proof}
Based on Definition \ref{def-2}, \eqref{eqn-17a} holds from \eqref{eqn-18a}, and we only need to verify \eqref{eqn-17b}. Since there are only two cases at each transmission time, we have from \eqref{eqn-18b}-\eqref{eqn-18c} that
\begin{align*}
\int_{\upsilon\in\mathcal{V}}W(i+1, \mathbf{h}(i, e(i), \upsilon))\mu(d\upsilon)&\leq(\lambda P_{\mathrm{f}}+\rho P_{\mathrm{s}})W(i, e) \\
&=\bar{\rho}W(i, e).
\end{align*}
From \eqref{eqn-19}, we have that \eqref{eqn-17b} holds and conclude that the protocol \eqref{eqn-14} is UGASp.
\hfill $\blacksquare$
\end{proof}

If $P_{\mathrm{f}}\equiv0$, then Proposition \ref{prop-3} is reduced to the one for the deterministic case. If $\lambda=1$ and $\mathscr{H}(\mathfrak{h}(i, e), 0)=0$, then Proposition \ref{prop-3} is similar to Proposition 5.3 in \citep{Tabbara2008input}. The following proposition establishes the relation between the classic UGAS protocol and the proposed UGASp protocol.

\begin{proposition}
\label{prop-3}
Consider the protocol \eqref{eqn-14}. Let Assumption \ref{asp-2} hold and the mapping $\mathfrak{h}$ depend on a deterministic protocol via \eqref{eqn-8}. If the deterministic protocol is UGAS, then it is a UGASp protocol with $P_{\mathrm{f}}\in[0, \lambda^{-1}(\rho_{1}-\rho))$, where $\rho_{1}\in(\rho, 1)$ and $\lambda, \rho$ are given in Assumption \ref{asp-2}.
\end{proposition}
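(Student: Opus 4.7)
The plan is to reduce this statement to a direct application of Proposition~\ref{prop-2}, since the only thing being strengthened here is that the deterministic protocol underlying $\mathfrak{h}$ (via \eqref{eqn-8}) is UGAS, together with a specific quantitative bound on the failure probability $P_{\mathrm{f}}$.

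First, I would unpack what is given: Assumption~\ref{asp-2} supplies the Lyapunov function $W$ together with the constants $\lambda \geq 1$ and $\rho \in [0,1)$ satisfying \eqref{eqn-18a}--\eqref{eqn-18c}. The extra hypothesis that the deterministic protocol (the map $h$ appearing through \eqref{eqn-8}) is UGAS is what underwrites the contraction estimate \eqref{eqn-18b}; this is the classical Lyapunov characterization of UGAS protocols from \citep{Nesic2004input,Carnevale2007lyapunov} and fixes the value of $\rho \in [0,1)$ used below. Condition \eqref{eqn-18c} simply assigns the failure-side growth rate $\lambda \geq 1$ for the case $\mathscr{S}(\upsilon)=0$. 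Thus the full set of hypotheses of Proposition~\ref{prop-2} is already in place.

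The main step is then a one-line algebraic check that $\bar\rho := \lambda P_{\mathrm{f}} + \rho P_{\mathrm{s}} < 1$. Using $P_{\mathrm{s}}=1-P_{\mathrm{f}}$, I rewrite
\begin{equation*}
\bar\rho = \rho + (\lambda - \rho)P_{\mathrm{f}} \leq \rho + \lambda P_{\mathrm{f}},
\end{equation*}
where the inequality uses $\rho \geq 0$, so $\lambda - \rho \leq \lambda$. The hypothesis $P_{\mathrm{f}} \in [0, \lambda^{-1}(\rho_{1}-\rho))$ with $\rho_{1}\in(\rho,1)$ then gives $\lambda P_{\mathrm{f}} < \rho_{1}-\rho$, whence
\begin{equation*}
\bar\rho < \rho + (\rho_{1} - \rho) = \rho_{1} < 1.
\end{equation*}
So \eqref{eqn-19} is verified.

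Finally, I would invoke Proposition~\ref{prop-2} to conclude that \eqref{eqn-14} is UGASp in the sense of Definition~\ref{def-2}, with the same Lyapunov function $W$ and contraction constant $\bar\rho$ just produced. There is no real obstacle to overcome: the result is essentially a corollary of Proposition~\ref{prop-2}, the only subtle point being the monotone bookkeeping step $(\lambda-\rho)\leq\lambda$ that permits the bound to be stated purely in terms of $\lambda^{-1}(\rho_{1}-\rho)$ rather than the tighter but less transparent $(1-\rho)/(\lambda-\rho)$. I would flag this looseness in a parenthetical remark, since it explains why the permissible range of $P_{\mathrm{f}}$ in the statement is (deliberately) slightly conservative compared with what Proposition~\ref{prop-2} alone would yield.
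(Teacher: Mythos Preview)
Your proposal is correct and follows essentially the same approach as the paper: both reduce the claim to the one-line verification that $\bar\rho=\lambda P_{\mathrm{f}}+\rho P_{\mathrm{s}}<\rho_{1}<1$ and then invoke the UGASp criterion of Definition~\ref{def-2}/Proposition~\ref{prop-2}. Your intermediate rearrangement $\bar\rho=\rho+(\lambda-\rho)P_{\mathrm{f}}\leq\rho+\lambda P_{\mathrm{f}}$ is algebraically equivalent to the paper's two-step bound $\lambda P_{\mathrm{f}}\leq\rho_{1}-\rho$ followed by $\rho P_{\mathrm{s}}\leq\rho$, and your parenthetical remark on the conservativeness of the bound is a nice clarification that the paper does not spell out.
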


\begin{proof}
For any deterministic protocol, we have the transmission success with the probability 1, which means $P_{\mathrm{s}}=1$ and $P_{\mathrm{f}}=0$. In addition, if the deterministic protocol is UGAS, then $\lambda=1$. Let $P_{\mathrm{f}}\in[0, (\rho_{1}-\rho)\lambda^{-1})$, and we have $P_{\mathrm{s}}\in(1-(\rho_{1}-\rho)\lambda^{-1}, 1]$ and
\begin{align*}
\int_{\upsilon\in\mathcal{V}}W(i+1, \mathbf{h}(i, e(i), \upsilon))\mu(d\upsilon)&=(\lambda P_{\mathrm{f}}+\rho P_{\mathrm{s}})W(i, e) \\
&\leq(\rho_{1}-\rho+\rho P_{\mathrm{s}})W(i, e) \\
&\leq\rho_{1}W(i, e).
\end{align*}
Therefore, any UGAS protocol is a UGASp protocol with $P_{\mathrm{f}}\in[0, (\rho_{1}-\rho)\lambda^{-1})$, which completes the proof.
\hfill $\blacksquare$
\end{proof}

Proposition \ref{prop-3} implies that any UGAS protocol is a UGASp protocol if the probability of the transmission failure is sufficiently small. In addition, since $\mathfrak{h}$ is required to be related to the deterministic protocol only, Proposition \ref{prop-3} is applicable to many deterministic protocols including those in \citep{Nesic2004input, Maass2019lp}.

%%%%%%%%%%%%%%%%%%%%%%%%%%%%%%%%%%%%%%%%%%%%%%%%%%%%%%%%%%%%%%%%%%%%%%%%%%%%%%%%%%%%%%%%%%%%%%%%%%%%%%%%%%%%%%%%%%
\section{Stability Analysis}
\label{sec-stability}
%%%%%%%%%%%%%%%%%%%%%%%%%%%%%%%%%%%%%%%%%%%%%%%%%%%%%%%%%%%%%%%%%%%%%%%%%%%%%%%%%%%%%%%%%%%%%%%%%%%%%%%%%%%%%%%%%%

After the analysis of the stochastic protocol, the main result of this paper is presented in this section. To this end, some assumptions on \eqref{eqn-10} are stated, which are adopted from \citep{Carnevale2007lyapunov}.

\begin{assumption}
\label{asp-3}
For the function $W$ in Assumption \ref{asp-2}, there exist a continuous function $H: \mathbb{R}^{n_{x}}\rightarrow\mathbb{R}_{+}$, and $L\in\mathbb{R}_{+}$ such that for all $(x, \kappa)\in\mathbb{R}^{n_{x}}\times\mathbb{N}_{+}$ and almost all $e\in\mathbb{R}^{n_{e}}$,
\begin{align}
\label{eqn-20}
\left\langle\frac{\partial W(\kappa, e)}{\partial e}, g(x, e)\right\rangle&\leq LW(\kappa, e)+H(x).
\end{align}
\end{assumption}

Assumption \ref{asp-3} implies the exponential growth of the function $W$ on the flow, which is natural since the $e$-subsystem in \eqref{eqn-10} is generally unstable in transmission intervals. Assumption \ref{asp-3} is feasible if the function $g$ in \eqref{eqn-12} satisfies a linear growth condition and $W$ is globally Lipschitz in $e$ and uniformly in $i$; see \citep[Remark 11]{Nesic2004input}. For the $x$-subsystem in \eqref{eqn-10}, the following assumption is such that, under the controller \eqref{eqn-3}, the $x$-subsystem is stabilizable with respect to the network-induced error.

\begin{assumption}
\label{asp-4}
There exist a locally Lipschitz function $V: \mathbb{R}^{n_{x}}\rightarrow\mathbb{R}_{+}$, $\alpha_{1V}, \alpha_{2V}\in\mathcal{K}_{\infty}$, continuous functions $\rho: \mathbb{R}^{n_{x}}\rightarrow\mathbb{R}_{+}, \theta: \mathbb{R}_{+}\rightarrow\mathbb{R}_{+}$ and constant $\gamma>0$, such that for all $x\in\mathbb{R}^{n_{x}}$,
\begin{align}
\label{eqn-21}
&\alpha_{1V}(|x|)\leq V(x)\leq\alpha_{2V}(|x|),
\end{align}
and for all $(\kappa, e)\in\mathbb{N}\times\mathbb{R}^{n_{e}}$ and almost all $x\in\mathbb{R}^{n_{x}}$,
\begin{align}
\label{eqn-22}
\langle\nabla V(x), f(x, e)\rangle&\leq-\rho(x)-H^{2}(x)-\theta(W(\kappa, e))+\gamma^{2}W^{2}(\kappa, e),
\end{align}
where $H: \mathbb{R}^{n_{x}}\rightarrow\mathbb{R}_{+}$ is defined in Assumption \ref{asp-3}.
\end{assumption}

Assumption \ref{asp-4} is on the $x$-subsystem, whose properties are described via the function $V$. Under the controller \eqref{eqn-3}, \eqref{eqn-22} implies that the $x$-subsystem satisfies the ISS-like property from $W$ to $x$ and the $\mathcal{L}_2$-stability property from $W$ to $H$. Assumption \ref{asp-4} is reasonable due to the implementation of the emulation-based approach, where the controller is assumed to be known \emph{a priori} to ensure the system stability in the network-free case. In the networked case, $W$ is treated as a disturbance for the $x$-subsystem. Since $W$ depends on the network-induced error and can be treated as a generic perturbation on $(y, u)$, Assumption \ref{asp-4} does not need any knowledge on the network. In this respect, Assumption \ref{asp-4} does not show any stability property of the system \eqref{eqn-10}. The final assumption is on transmission intervals. Different from the deterministic case \citep{Carnevale2007lyapunov}, the transmission failure is involved and the explicit form of the MATI is as follows.

\begin{assumption}
\label{asp-5}
The condition $\rho^{2}P_{\mathrm{s}}+\lambda^{2}P_{\mathrm{f}}<1$ hold, and the MATI is upper bounded as $\tau_{\mati}\leq\tau^{\ast}$, where
\begin{align}
\label{eqn-23}
\tau^{\ast}:=\left\{\begin{aligned}
&\frac{1}{Lr}\arctan\left(\frac{(1-\rho\rho^{\prime})Lr}{\gamma(\rho+\rho^{\prime})+L(1+\rho\rho^{\prime})}\right), & \gamma>L, \\
&\frac{1}{L} \frac{1-\rho\rho^{\prime}}{(1+\rho)(1+\rho^{\prime})}, & \gamma=L, \\
&\frac{1}{Lr} \arctanh\left(\frac{(1-\rho\rho^{\prime})Lr}{\gamma(\rho+\rho^{\prime})+L(1+\rho\rho^{\prime})}\right), & \gamma<L,
\end{aligned}\right.
\end{align}
where $r:=\sqrt{(\gamma/L)^{2}-1}$, $\rho^{\prime}:=\rho^{-1}(\rho^{2}P_{\mathrm{s}}+\lambda^{2}P_{\mathrm{f}})$, and $\rho\in[0, 1), \lambda\geq1, L\geq0, \gamma>0$ are from Assumptions \ref{asp-2}-\ref{asp-4}.
\end{assumption}

The MATI in \eqref{eqn-23} depends on both the constants in Assumptions \ref{asp-2}-\ref{asp-5} and the probabilities of transmission success and failure. The condition $\rho^{2}P_{\mathrm{s}}+\lambda^{2}P_{\mathrm{f}}<1$ is utilised to guarantee the existence of the MATI. This condition can be satisfied if \eqref{eqn-19} holds and $P_{\mathrm{f}}<\frac{1-\rho}{\lambda(\lambda-\rho)}$, which can be treated as an additional constraint on the probability of transmission failure. In particular, this additional constraint can be removed if $\lambda=1$; see Section \ref{sec-casestudy}. On the other hand, \eqref{eqn-23} includes many existing deterministic cases. For instance, if $P_{\mathrm{s}}=1$, then $P_{\mathrm{f}}=0$ and \eqref{eqn-23} is equivalent to the one in \citep{Carnevale2007lyapunov} for the deterministic case. Since $\rho^{2}P_{\mathrm{s}}+\lambda^{2}P_{\mathrm{f}}>\rho^{2}$ holds for all $P_{\mathrm{f}}\in(0, 1)$, the derived MATI in \eqref{eqn-23} is smaller than the one in \citep{Carnevale2007lyapunov}, which implies that the MATI condition for the stochastic case is strictly tighter than the one for the deterministic case.

With all above assumptions, we are now in a position to state the main result of this section. The following theorem and proposition assert that the UGASp protocol leads to the system stability under sufficiently small MATI.

\begin{theorem}
\label{thm-1}
Consider the system \eqref{eqn-10} and let Assumptions \ref{asp-2}-\ref{asp-5} hold. Let $\mathcal{A}:=\{\xi: x=0, e=0\}$. There exists a locally Lipschitz function $U: C\cup D\cup\mathcal{G}(D\times\mathcal{V})\rightarrow\mathbb{R}_{+}$ for the set $\mathcal{A}$ satisfying the following conditions:
\begin{enumerate}[(i)]
  \item  there exist $\alpha_{1}, \alpha_{2}\in\mathcal{K}_{\infty}$ such that for all $\xi\in C\cup D\cup\mathcal{G}(D\times\mathcal{V})$, $\alpha_{1}(|\xi|_{\mathcal{A}})\leq U(\xi)\leq\alpha_{2}(|\xi|_{\mathcal{A}})$;
\vspace{-5pt}
  \item  for all $\xi\in C$ with $U(\xi)\neq0$, $U^{\circ}(\xi; \mathcal{F}(\xi))\leq-\varphi(\xi)$ with $\varphi\in\mathcal{PD}(\mathcal{A})$;
\vspace{-5pt}
  \item  for all $\xi\in D$, $\int_{\upsilon\in\mathcal{V}}U(\mathcal{G}(\xi, \upsilon))\mu(d\upsilon)\leq U(\xi)$.
\end{enumerate}
\end{theorem}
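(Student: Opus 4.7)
The plan is to construct the explicit Lyapunov function
\[U(\xi) = V(x) + \gamma\,\phi(\tau)\,W^{2}(\kappa, e),\]
where $\phi\colon[0,\tau_{\mati}]\to\mathbb{R}_{+}$ is the solution of the Riccati-type ODE $\dot{\phi} = -2L\phi - \gamma(\phi^{2}+1)$ with initial value $\phi(0)=1/\rho$. Writing $\rho^{\prime} := \rho^{-1}(\rho^{2}P_{\mathrm{s}} + \lambda^{2}P_{\mathrm{f}})$, the condition $\rho^{2}P_{\mathrm{s}}+\lambda^{2}P_{\mathrm{f}}<1$ in Assumption \ref{asp-5} is exactly $\rho^{\prime}<1/\rho$, and separation of variables in the ODE shows that the MATI expression \eqref{eqn-23} is precisely the time required for $\phi$ to decrease from $1/\rho$ to $\rho^{\prime}$, with the three branches arising from the sign of $\gamma^{2}-L^{2}$. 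On $[0,\tau_{\mati}]$, $\phi$ is $C^{1}$, strictly decreasing, and bounded in $[\rho^{\prime},1/\rho]$, so $U$ is locally Lipschitz on $C\cup D\cup\mathcal{G}(D\times\mathcal{V})$.

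For item (i), since $\mathcal{A}$ places no restriction on $(\tau,\kappa)$, one checks $|\xi|_{\mathcal{A}}^{2}=|x|^{2}+|e|^{2}$, and combining \eqref{eqn-21} and \eqref{eqn-18a} with $\phi(\tau)\in[\rho^{\prime},1/\rho]$ produces the required class-$\mathcal{K}_{\infty}$ envelopes $\alpha_{1},\alpha_{2}$. For item (ii), applying the Clarke chain rule to $U$ along $\mathcal{F}(\xi)=(f(x,e),g(x,e),1,0)$ and invoking Assumptions \ref{asp-3}--\ref{asp-4} (which hold for almost every $x$ and $e$) yields
\[U^{\circ}(\xi;\mathcal{F}(\xi)) \leq -\rho(x) - H^{2}(x) - \theta(W) + \gamma^{2}W^{2} + \gamma\dot{\phi}W^{2} + 2\gamma\phi W\bigl(LW+H(x)\bigr).\]
The Young-type inequality $2\gamma\phi W H(x)\leq H^{2}(x)+\gamma^{2}\phi^{2}W^{2}$ cancels the $H^{2}$ term and leaves the $W^{2}$ coefficient as $\gamma\bigl(\gamma+\dot{\phi}+2L\phi+\gamma\phi^{2}\bigr)$, which vanishes identically by the ODE for $\phi$. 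Hence $U^{\circ}(\xi;\mathcal{F}(\xi))\leq -\rho(x)-\theta(W(\kappa,e))=:-\varphi(\xi)$, with $\varphi\in\mathcal{PD}(\mathcal{A})$.

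For item (iii), at a jump point $\xi\in D$ one has $\mathcal{G}(\xi,\upsilon)=(x,e^{+},0,\kappa+1)$ with $e^{+}$ given by \eqref{eqn-13}, so
\[\int_{\mathcal{V}} U(\mathcal{G}(\xi,\upsilon))\,\mu(d\upsilon) = V(x) + \gamma\phi(0)\int_{\mathcal{V}} W^{2}(\kappa+1,e^{+})\,\mu(d\upsilon).\]
Splitting $\mathcal{V}$ into the transmission-success event (probability $P_{\mathrm{s}}$, with Assumption \ref{asp-2}(b) giving $W(\kappa+1,e^{+})\leq\rho W(\kappa,e)$) and the transmission-failure event $\{\mathscr{S}(\upsilon)=0\}$ (probability $P_{\mathrm{f}}$, with Assumption \ref{asp-2}(c) giving $W(\kappa+1,e^{+})\leq\lambda W(\kappa,e)$), the $W^{2}$-integral is bounded by $(\rho^{2}P_{\mathrm{s}}+\lambda^{2}P_{\mathrm{f}})W^{2}=\rho\rho^{\prime}W^{2}$. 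Using $\phi(0)=1/\rho$ together with $\phi(\tau)\geq\phi(\tau_{\mati})\geq\rho^{\prime}$ for every $\tau\in[\tau_{\miati},\tau_{\mati}]$, the integral is bounded by $V(x)+\gamma\rho^{\prime}W^{2}\leq V(x)+\gamma\phi(\tau)W^{2}=U(\xi)$.

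The hardest part will be item (iii): in contrast to the deterministic analysis of \citep{Carnevale2007lyapunov}, the post-jump bound on $W^{2}$ is a probabilistic average in which rare failure events can amplify $W$ by $\lambda\geq 1$, so the effective discrete-time contraction ratio is the weighted quantity $\rho\rho^{\prime}>\rho^{2}$. Matching this tighter ratio to what the flow ODE can deliver is exactly what forces \eqref{eqn-23}, and executing the explicit integration of $\phi$ for each of the cases $\gamma>L$, $\gamma=L$, $\gamma<L$ is what turns Assumption \ref{asp-5} into a genuine admissibility condition for the MATI.
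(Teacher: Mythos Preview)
Your proposal is correct and follows essentially the same approach as the paper: the identical Lyapunov candidate $U(\xi)=V(x)+\gamma\phi(\tau)W^{2}(\kappa,e)$ with $\phi$ solving $\dot{\phi}=-2L\phi-\gamma(\phi^{2}+1)$, $\phi(0)=\rho^{-1}$, the same range $\phi\in[\rho^{\prime},\rho^{-1}]$ tied to \eqref{eqn-23}, and the same treatment of each item. The only cosmetic difference is that in item~(ii) you invoke Young's inequality $2\gamma\phi WH\le H^{2}+\gamma^{2}\phi^{2}W^{2}$ while the paper completes the square $-[H(x)-\gamma\phi(\tau)W]^{2}$, which are equivalent rearrangements of the same cancellation.
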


\begin{proof}
See \ref{sec-appendixA}.
\hfill $\blacksquare$
\end{proof}

\begin{proposition}
\label{prop-4}
Consider the system \eqref{eqn-10} and let Assumptions \ref{asp-2}-\ref{asp-5} hold. The set $\mathcal{A}:=\{\xi: x=0, e=0\}$ is UGASp for \eqref{eqn-10}.
\end{proposition}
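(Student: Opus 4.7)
The plan is to reduce the proposition to Theorem 1 and a standard Lyapunov-based characterization of UGASp for stochastic hybrid systems. By Proposition 1, the system \eqref{eqn-10} satisfies Assumption \ref{asp-1} and is therefore well-posed. By Theorem 1, under Assumptions \ref{asp-2}--\ref{asp-5}, there exists a locally Lipschitz function $U$ defined on $C\cup D\cup\mathcal{G}(D\times\mathcal{V})$ that is sandwiched by class $\mathcal{K}_\infty$ functions of $|\xi|_{\mathcal{A}}$ (condition (i)), strictly decreasing along the flow in the sense $U^{\circ}(\xi;\mathcal{F}(\xi))\leq-\varphi(\xi)$ with $\varphi\in\mathcal{PD}(\mathcal{A})$ (condition (ii)), and nonincreasing in expectation at jumps (condition (iii)). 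The remaining task is to deduce UGASp in the sense of Definition \ref{def-1} from these three properties and the structure of the data $(C,D,\mathcal{F},\mathcal{G},\mu)$. I would invoke the Lyapunov theorem for stochastic hybrid systems from \citep{Teel2013lyapunov, Subbaraman2016recurrence}, whose hypotheses are exactly those supplied by Theorem 1.

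For the UGSp part, the argument I would carry out is the following. Condition (iii), combined with the strong Markov property of the i.i.d.\ input sequence $\{\mathbf{v}_i\}$, makes the process $\{U(\xi(t_k,k))\}_{k\in\mathbb{N}}$ an $\{\mathfrak{F}_k\}$-supermartingale; condition (ii) gives nonincrease of $U$ along every flow piece; hence $U(\xi(t,j))\leq U(\zeta)$ almost surely along each sample path. A Markov inequality applied to the supermartingale, together with the sandwich bound (i), converts this into the two probabilistic inclusion statements that define UGSp, by choosing $\delta$ in terms of $\varepsilon, \rho$ through $\alpha_1,\alpha_2$.

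For the UGAp part, the key structural observation is that $D\subset\{\tau\geq\tau_{\miati}\}$ while the jump map resets $\tau$ to $0$; thus between any two consecutive jump times the system must flow for at least $\tau_{\miati}>0$ units of ordinary time. On any such flow interval that is contained in the complement of $\mathcal{A}+\delta\mathbb{B}^{\circ}$, condition (ii) combined with the definition of class $\mathcal{PD}(\mathcal{A})$ yields a uniform positive lower bound $\bar\varphi>0$ on $\varphi$, giving an almost sure decrease of $U$ by at least $\tau_{\miati}\bar\varphi$ on that interval. Since the jump step is nonexpansive in expectation by (iii), iterating this across jumps and using that almost surely complete solutions have $t+j\to\infty$ with $t\to\infty$ (there is no Zeno behavior because $\tau_{\miati}>0$) forces $U(\xi(t,j))\to 0$ along almost every sample path. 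Applying the sandwich bound (i) again and a standard Markov-inequality argument converts this into the required inclusion statement defining UGAp.

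The main obstacle will be reconciling the almost-sure decrease along flows with the only-in-expectation nonincrease across jumps, while keeping track of the hybrid time domain of each sample path. The cleanest way to handle this is to appeal directly to the general UGASp Lyapunov theorem in \citep{Teel2013lyapunov, Subbaraman2016recurrence}, whose hypotheses are precisely (i)--(iii) of Theorem \ref{thm-1} together with the well-posedness from Proposition \ref{prop-1}; the substantive analytical work has already been carried out in constructing $U$, so the present proposition is essentially a direct application of that framework.
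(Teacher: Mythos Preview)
Your proposal is correct and follows essentially the same route as the paper: invoke Theorem~\ref{thm-1} to obtain the Lyapunov function $U$ with properties (i)--(iii), then appeal to the Lyapunov characterizations in \citep{Teel2013lyapunov, Subbaraman2016recurrence} to conclude UGASp, using the $\tau_{\miati}$ dwell-time to rule out Zeno and force genuine decrease between jumps. The paper's proof differs only in two small technical additions you gloss over: it explicitly notes that $\mathcal{A}$ is unbounded (because of the $\kappa$ component) and checks absence of finite escape times before citing \citep[Section~4.2]{Teel2013lyapunov} for UGSp; and for UGAp it phrases the attractivity step as a short level-set contradiction argument (no almost surely complete solution can remain in a nonzero sublevel set of $U$) before invoking \citep[Theorem~8]{Subbaraman2016recurrence}, rather than your more direct decrement-counting sketch.
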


\begin{proof}
In Theorem \ref{thm-1}, we have shown the existence of the Lyapunov function $U: C\cup D\cup\mathcal{G}(D\times\mathcal{V})\rightarrow\mathbb{R}_{+}$. The set $\mathcal{A}$ is closed by definition but unbounded due to the variable $\kappa\in\mathbb{N}$. Based on \citep[Remark 4]{Nesic2013finite} and item (i) of Theorem \ref{thm-1}, we can see that there exist no finite escape times for the continuous dynamics in \eqref{eqn-10}.
From \citep[Section 4.2]{Teel2013lyapunov} and Theorem \ref{thm-1}, we have that the set $\mathcal{A}$ is UGSp for the system \eqref{eqn-10}.

Next, we show that, given any non-zero level set of $U$, there does not exist an almost surely complete solution remaining in this level set. Let the non-zero level set of $U$ be $\mathcal{Q}_{\delta}:=\{\xi\in\mathbb{R}^{n_{\xi}}: U(\xi)\leq\delta, \delta>0\}$. Assume there exists an almost surely complete solution $\xi_{1}$ remaining in $\mathcal{Q}_{\delta}$ almost surely. If $U(\xi_{1}(t, j))$ converges to zero as $t+j\rightarrow\infty$, then the almost surely complete solution $\xi_{1}$ does not remain in $\mathcal{Q}_{\delta}$ almost surely. Assume there exists $k>t_{0}$ such that for all $(t, j)\in\dom\xi_{1}$ with $t+j\geq k$, $U(\xi_{1}(t, j))\leq\delta$ and $U(\xi_{1}(t, j))\neq0$. In the flow set, we have from item (ii) of Theorem \ref{thm-1} that $U$ is strictly decreasing. That is, for $(t_{j}, j), (t_{j+1}, j)\in\dom\xi_{1}$, $U(\xi_{1}(t_{j+1}, j))<U(\xi_{1}(t_{j}, j))$. In the jump set, $U$ does not increase from item (iii) of Theorem \ref{thm-1}, and $U(\xi_{1}(t_{j+1}, j+1))\leq U(\xi_{1}(t_{j+1}, j))$. Since the jumps occur at least $\tau_{\miati}$ units of time, along the hybrid time line the almost surely complete solution $\xi_{1}$ will enter into a smaller level from $\mathcal{Q}_{\delta}$. That is, there exist $\bar{k}>k$ and $\bar{\delta}<\delta$ such that $\xi_{1}(t, j)\in\mathcal{Q}_{\bar{\delta}}$ holds for all $(t, j)\in\dom\xi_{1}$ with $t+j\geq\bar{k}$. Hence, the almost surely complete solution $\xi_{1}$ will be convergent instead of stay in the level set $\mathcal{Q}_{\delta}$, which contradicts with the aforementioned assumption. As a result, based on the above analysis and \citep[Theorem 8]{Subbaraman2016recurrence}, the set $\mathcal{A}$ is UGASp for the system \eqref{eqn-10}.
\hfill $\blacksquare$
\end{proof}

Theorem \ref{thm-1} and Proposition \ref{prop-4} show that under the stochastic channel, the stability property of the system \eqref{eqn-2}-\eqref{eqn-3} is guaranteed if the MATI is upper bounded in Assumption \ref{asp-5}. In this way, Theorem \ref{thm-1} and Proposition \ref{prop-4} extend the existing result in \citep{Carnevale2007lyapunov} from the deterministic case to the stochastic case.

%%%%%%%%%%%%%%%%%%%%%%%%%%%%%%%%%%%%%%%%%%%%%%%%%%%%%%%%%%%%%%%%%%%%%%%%%%%%%%%%%%%%%%%%%%%%%%%%%%%%%%%%%%%%%%%%%%
\section{Case Study: Linear Systems with Specific Networks}
\label{sec-casestudy}
%%%%%%%%%%%%%%%%%%%%%%%%%%%%%%%%%%%%%%%%%%%%%%%%%%%%%%%%%%%%%%%%%%%%%%%%%%%%%%%%%%%%%%%%%%%%%%%%%%%%%%%%%%%%%%%%%%

Consider the plant and controller with the following linear state-space form
\begin{align}
\label{eqn-24}
\begin{aligned}
\dot{x}_{\p}&=A_{\p} x_{\p}+B_{\p}\hat{u}, &\quad y&=C_{\p}x_{\p}, \\
\dot{x}_{\ct}&=A_{\ct}x_{\ct}+B_{\ct}\hat{y}, &\quad u&=C_{\ct}x_{\ct}.
\end{aligned}
\end{align}
Let $x=(x_{\p}, x_{\ct})$ and $e=(e_{y}, e_{u})$. The dynamics in the continuous intervals is (see also \citep{Walsh2002stability, Nesic2004input})
\begin{align}
\label{eqn-25}
\dot{x}=A_{1}x+B_{1}e,  \quad \dot{e}=A_{2}x+B_{2}e
\end{align}
with
\begin{align*}
A_{1}&=\begin{bmatrix}A_{\p} & B_{\p}C_{\ct} \\ B_{\ct}C_{\p} & A_{\ct}  \end{bmatrix}, \quad
B_{1}=\begin{bmatrix}0 & B_{\p} \\ B_{\ct}  & 0 \end{bmatrix}, \\
A_{2}&=\begin{bmatrix}-C_{\p}A_{\p} & -C_{\p}B_{\p}C_{\ct} \\ -C_{\ct}B_{\ct}C_{\p} & -C_{\ct}A_{\ct}  \end{bmatrix}, \quad
B_{2}=\begin{bmatrix}0 & -C_{\p}B_{\p} \\ -C_{\ct}B_{\ct}  & 0 \end{bmatrix}.
\end{align*}
In the network-free case, the system \eqref{eqn-25} is assumed to be stable. In the following, we consider two types of networks.

\emph{1) The Ethernet-like Networks \citep{Tabbara2008input}:}
In Ethernet-like networks, carrier sense multiple access with collision detection (CSMA/CD) is a widely-used medium access protocol allowing different links to access to the channel if the channel is idle. Therefore, the packet dropout occurs when multiple links attempt to transmit their packets. To model the CSMA/CD protocol, the following stochastic protocol is proposed in \citep{Tabbara2008input}:
\begin{equation}
\label{eqn-26}
h(i, e):=\mathcal{H}_{i}(e)e,
\end{equation}
where $\mathcal{H}_{i}(\cdot)$ is an i.i.d. random mapping and takes values from a finite matrix set $\mathcal{M}:=\{M_{0}, M_{1}, \ldots, M_{\ell}\}$ with $M_{0}=I$. Here, $\ell$ is the number of the links. Each component in $\mathcal{M}$ corresponds to a possible case of the transmission success via certain link \citep[Section IV]{Tabbara2008input}. $M_{l}$ is such that $M_{l}e=(e_{1}, \ldots, e_{l-1}, 0, e_{l+1}, \ldots, e_{\ell})$, which means that the $l$-th link acquires the channel and transmits its packet successfully. For $i\in\mathbb{N}_{+}$ and $j\in\{1, \ldots, \ell\}$, $\pr\{\mathcal{H}_{i}(e)=M_{j}\}=P_{\mathrm{s}}/\ell$, which implies that each link is equally likely to be transmitted successfully. Therefore, the matrix choice from the set $M$ consists of two parts: the choice of the node and the transmission success. Let the choice of the node be deterministic, and thus both Round-Robin (RR) protocol and Try-Once-Discard (TOD) protocol can be applied \citep{Nesic2004input}.

To develop the stochastic hybrid model, we introduce a random variable $\upsilon\in\{0, 1\}^{\ell}$ to show the transmission status of each link, and rewrite \eqref{eqn-26} into the following stochastic protocol:
\begin{align}
\label{eqn-27}
\mathbf{h}(i, e, \upsilon)=(I-\mathscr{S}(\upsilon))e, 
\end{align}
which implies that $\mathscr{H}(\mathfrak{h}(i, e(i)), \mathscr{S}(\upsilon(i+1)))=0$. In \eqref{eqn-27}, the random variable $\upsilon$ plays the same role as the matrix set $\mathcal{M}$, and $\pr\{\upsilon_{l}=1\}=P_{\mathrm{s}}/\ell$. From the above analysis for the Ethernet-like network, the overall system dynamics is
\begin{align}
\label{eqn-28}
\begin{aligned}
\dot{x}&=A_{1}x+B_{1}e, \quad t\in[t_{i}, t_{i+1}], \\
\dot{e}&=A_{2}x+B_{2}e, \quad t\in[t_{i}, t_{i+1}], \\
e(t^{+}_{i})&=(I-\mathscr{S}(\upsilon(t^{+}_{i})))e(t_{i}),   \\
\upsilon&\sim\mu(\cdot).
\end{aligned}
\end{align}

Since the system \eqref{eqn-24} is stable in the network-free case, the $x$-subsystem is  $\mathcal{L}_{2}$-gain stable from $e$ to $A_{1}x$ with gain $\gamma\geq0$, which means the satisfaction of Assumption \ref{asp-4}. Next, we state a proposition to show the satisfaction of Assumptions \ref{asp-2}-\ref{asp-3} and further the stability property of the system \eqref{eqn-28}.

\begin{proposition}
\label{prop-5}
Consider the system \eqref{eqn-28}.
\begin{enumerate}[(1)]
  \item If the node choosing strategy is of the RR mechanism, then $\alpha_{1W}=1, \alpha_{2W}=\sqrt{\ell}, \rho=\sqrt{(\ell-1)/\ell}, \lambda=\sqrt{\ell}, L=\sqrt{\ell}|B_{1}|$ such that Assumptions \ref{asp-2}-\ref{asp-3} hold. In addition, if $(1-P_{\mathrm{s}})\ell+P_{\mathrm{s}}(\ell-1)/\ell<1$, where $P_{\mathrm{s}}\in(0, 1)$ is the successful transmission probability, then the system \eqref{eqn-28} is UGASp with the MATI defined in \eqref{eqn-23}.\vspace{-5pt}

  \item If the node choosing strategy is of the TOD mechanism, then $\alpha_{1W}=\alpha_{2W}=1, \rho=\sqrt{(\ell-1)/\ell}, \lambda=1, L=|B_{1}|$ such that Assumptions \ref{asp-2}-\ref{asp-3} hold. In addition, the system \eqref{eqn-28} is UGASp with the MATI defined in \eqref{eqn-23}.
\end{enumerate}
\end{proposition}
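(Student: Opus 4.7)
The plan is, for each mechanism, to exhibit a Lyapunov function $W$ that verifies Assumptions~\ref{asp-2} and \ref{asp-3} with the stated constants, then check the probability condition in Assumption~\ref{asp-5}, and finally invoke Proposition~\ref{prop-4} to conclude UGASp of $\mathcal{A}=\{\xi:x=0,e=0\}$. Assumption~\ref{asp-4} is inherited from the assumed stability of \eqref{eqn-24} in the network-free case via a standard $\mathcal{L}_{2}$-gain argument, so the bulk of the work lies on the protocol side. Note also that for both mechanisms the proposed protocol reduces to \eqref{eqn-27}, so $\mathscr{H}(\mathfrak{h}(i,e),0)=0$, which simplifies \eqref{eqn-18c} to $W(i+1,e)\leq\lambda W(i,e)$.

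For the RR mechanism, I would adopt the trajectory-summed Lyapunov function $W(i,e)=\sqrt{\sum_{k=0}^{\ell-1}|\phi(i+k,i,e)|^{2}}$, where $\phi(i+k,i,e)$ is the state reached after $k$ unperturbed RR iterations started from $e$ at index $i$. Since each of the $\ell$ nodes is cleared exactly once per cycle, the sandwich $|e|\leq W(i,e)\leq\sqrt{\ell}\,|e|$ holds, giving \eqref{eqn-18a} with $\alpha_{1W}=1$ and $\alpha_{2W}=\sqrt{\ell}$. Using $\phi(i+1+k,i+1,h(i,e))=\phi(i+1+k,i,e)$ and re-indexing, one obtains the telescoping identity $W^{2}(i+1,h(i,e))=W^{2}(i,e)-|e|^{2}+|\phi(i+\ell,i,e)|^{2}=W^{2}(i,e)-|e|^{2}$ because $\phi(i+\ell,i,e)=0$ after a full cycle; combining with $|e|^{2}\geq W^{2}(i,e)/\ell$ yields \eqref{eqn-18b} with $\rho=\sqrt{(\ell-1)/\ell}$. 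Applying the two ends of the sandwich directly gives $\lambda=\sqrt{\ell}$ for \eqref{eqn-18c}. Differentiating $W$ along $\dot{e}=A_{2}x+B_{2}e$ and applying Cauchy--Schwarz with the upper sandwich bound produces Assumption~\ref{asp-3} with the stated $L$ and an appropriate $H(x)$. Finally, the inequality $\rho^{2}P_{\mathrm{s}}+\lambda^{2}P_{\mathrm{f}}<1$ in Assumption~\ref{asp-5} becomes, after using $P_{\mathrm{f}}=1-P_{\mathrm{s}}$, exactly the hypothesis $(\ell-1)P_{\mathrm{s}}/\ell+\ell(1-P_{\mathrm{s}})<1$ of item (1), so Proposition~\ref{prop-4} delivers UGASp.

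For the TOD mechanism the argument simplifies because $W(i,e)=|e|$ suffices. Then \eqref{eqn-18a} is trivial with $\alpha_{1W}=\alpha_{2W}=1$; the defining property of TOD, that the component of largest magnitude is zeroed at each transmission, gives $|h(i,e)|^{2}\leq|e|^{2}-|e|^{2}/\ell=((\ell-1)/\ell)|e|^{2}$, which is \eqref{eqn-18b} with $\rho=\sqrt{(\ell-1)/\ell}$. Since $W$ is time-independent, \eqref{eqn-18c} holds with $\lambda=1$. Assumption~\ref{asp-3} follows directly from $\langle e/|e|,A_{2}x+B_{2}e\rangle\leq|B_{2}||e|+|A_{2}x|$ with the stated $L$ and $H(x)=|A_{2}x|$. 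With $\lambda=1$, the condition $\rho^{2}P_{\mathrm{s}}+P_{\mathrm{f}}<1$ rearranges to $(1-\rho^{2})P_{\mathrm{f}}<1-\rho^{2}$, automatic whenever $\rho<1$, which is why item (2) carries no probabilistic hypothesis. Proposition~\ref{prop-4} again yields UGASp.

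The main obstacle I anticipate is the RR calculation: establishing the telescoping identity $W^{2}(i+1,h(i,e))=W^{2}(i,e)-|e|^{2}$ and extracting the precise constants $\rho=\sqrt{(\ell-1)/\ell}$ and $\lambda=\sqrt{\ell}$ over one full cycle requires careful bookkeeping across the indices, and the same constants must then be tracked through the derivative estimate for Assumption~\ref{asp-3} to land on the claimed $L$. By contrast, the TOD half and the reductions of Assumption~\ref{asp-5} to the claimed probability inequalities are elementary.
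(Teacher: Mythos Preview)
Your proposal is correct and follows exactly the approach the paper relies on. The paper does not include a standalone proof of Proposition~\ref{prop-5}, but the Lyapunov functions you chose---the trajectory-summed $W(i,e)=\sqrt{\sum_{k\ge 0}|\phi(i+k,i,e)|^{2}}$ for RR and $W(e)=|e|$ for TOD---are precisely those the paper invokes (see the constructions in Section~\ref{subsec-example2} for $\ell=3$ and the citations to \citep{Nesic2004input, Heemels2010networked}); the telescoping argument, the sandwich bounds yielding $\alpha_{1W},\alpha_{2W},\rho,\lambda$, the derivative estimate for Assumption~\ref{asp-3}, and the reduction of the condition $\rho^{2}P_{\mathrm{s}}+\lambda^{2}P_{\mathrm{f}}<1$ to the stated probabilistic hypotheses are all the intended steps, and the final appeal to Proposition~\ref{prop-4} is the right conclusion. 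One cosmetic point: your derivative computation naturally produces $L=|B_{2}|$ (respectively $\sqrt{\ell}\,|B_{2}|$), whereas the proposition as written lists $|B_{1}|$; the numerical values in Section~\ref{subsec-example1} confirm that $|B_{2}|$ is what is meant, so this is a typographical slip in the statement rather than an error in your argument.
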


\newcounter{TempEqCnt}
\setcounter{TempEqCnt}{\value{equation}} 
\setcounter{equation}{30} 
\begin{figure*}[!b]
\normalsize
\hrulefill
\begin{align}
\label{eqn-31}
\begin{aligned}
A_{1}&=\begin{bmatrix}1.3800 & -0.2077 & 6.7150& -5.6760 & 0& 0 \\ -0.5814 &-15.6480 & 0& 0.6750 &-11.3580&0 \\
14.6630 &2.0010 &-22.3840 & 21.6230 &-2.2720 &-25.1680\\ 0.048 &4.273 & 1.3430 &-2.1040 &-2.2720 &0  \\
0 &1.0000&0&0&0&0 \\ 1.000 &0 &1.0000&-1.0000& 0&0\end{bmatrix}, \quad B_{1}=\begin{bmatrix}0 & 0 \\
0 &-11.3580 \\ -15.7300 &-2.2720 \\ 0 &-2.2720  \\ 0 &1.0000 \\ 1.000 &0 \end{bmatrix}, \\
A_{2}&=\begin{bmatrix} 13.3310&0.2077&17.0120&-18.0510&0& 25.1680 \\
0.5814 &15.6480&0 & -0.6750 &11.3580 & 0 \end{bmatrix}, \quad B_{2}=\begin{bmatrix}15.7300 & 0 \\ 0 & 11.3580 \end{bmatrix}.
\end{aligned}
\end{align}
\end{figure*}
\setcounter{equation}{\value{TempEqCnt}} 

In item (2) of Proposition \ref{prop-5}, since $\rho^{2}P_{\mathrm{s}}+\lambda^{2}P_{\mathrm{f}}=(1-P_{\mathrm{s}})+P_{\mathrm{s}}(\ell-1)/\ell<1$ holds directly, there exists no constraint on the probability of the transmission success in the TOD case.

\emph{2) WirelessHART Networks \citep{Maass2019stabilization}: }
To model WirelessHART (WH) networks, the stochastic protocol of the form \eqref{eqn-26} is proposed in \citep{Maass2019stabilization}. The difference between the protocols in \citep{Tabbara2008input} and \citep{Maass2019stabilization} lies in the matrix set $\mathcal{M}$. In \citep{Maass2019stabilization}, the WH network has $\ell_{y}$ field devices in the plant-to-controller side and $\ell_{u}$ field devices in the controller-to-plant side. Each field device has both the reception and transmission behaviour. Considering the transmission between the field devices and the plant/controller, $\mathcal{M}\setminus\{M_{0}\}$ has $\ell_{y}+\ell_{u}+2$ elements. Each component in $\mathcal{M}$ corresponds to a possible case of the reception/transmission behaviour of certain field device \citep[Section IV]{Maass2019lp}. Due to both reception and transmission behaviour of field devices, the matrix $M_{l}$ is the form $\diag(M^{y}_{l}, M^{u}_{l})$ with
\begin{align*}
M^{\star}_{l}&=\begin{bmatrix} \Psi_{\star}(l) & 0 \\ I_{\star}-\Psi_{\star}(l) & \Gamma_{\star}(l) \end{bmatrix}, \\
\Gamma^{\star}(l)&=\begin{bmatrix}
\gamma_{1}^{\star}(l) I_{\star} & & & \\
(1-\gamma_{1}^{\star}(l)) I_{\star} & \gamma_{2}^{\star}(l) I_{\star} &  & \\
&  \ddots & \ddots & \\
& & (1-\gamma^{\star}_{\ell_{\star}-1}(l)) I_{\star} & \gamma^{\star}_{\ell_{\star}}(l) I_{\star}
\end{bmatrix},
\end{align*}
where $\star\in\{y, u\}$ and $\Psi_{\star}(l)=\diag(\delta_{1}^{\star}(l) I_{\star}, \ldots, \delta_{\ell_{\star}}^{\star}(l)I_{\star})$. More precisely, for all $\mathfrak{a}\in\{1, \ldots, \ell_{y}\}$ and $\mathfrak{b}\in\{1, \ldots, \ell_{u}\}$, $\delta_{\mathfrak{a}}^{y}(l)=0$ if $l=\mathfrak{a}$; otherwise $\delta_{\mathfrak{a}}^{y}(l)=1$. $\gamma_{\mathfrak{a}}^{y}(l)=0$ if $l=\mathfrak{a}+1$; otherwise, $\gamma_{\mathfrak{a}}^{y}(l)=1$. $\delta_{\mathfrak{b}}^{u}(l)=0$ if $l=\mathfrak{b}+\ell_{y}+1$; otherwise, $\delta_{\mathfrak{b}}^{u}(l)=1$. $\gamma_{\mathfrak{b}}^{u}(l)=0$ if $l=\mathfrak{b}+\ell_{y}+2$; otherwise $\gamma_{\mathfrak{b}}^{u}(l)=1$. From the matrix $M_{l}$, we can see that: (1) for each field device, the updates of the network-induced error during reception and transmission affect each other; (2) the update of the network-induced error in any field device affects the updates in other field devices.

From the above observation, the updates in field devices are coupled, which shows the necessity of the mapping $\mathscr{H}$ in \eqref{eqn-16}. For WH networks, the protocol in \citep{Maass2019stabilization} can be rewritten into the stochastic protocol of the form \eqref{eqn-14} with $\mathbf{h}(i, e, \upsilon):=\diag(\mathbf{h}_{y}(i, e, \upsilon), \mathbf{h}_{u}(i, e, \upsilon))$ and
\begin{align}
\label{eqn-29}
&\mathbf{h}_{\star}(i, e, \upsilon)=(I-\mathscr{S}(\upsilon_{\star}))e+\begin{bmatrix} 0 & 0 \\
\mathscr{S}(\upsilon_{\star}) & \Upsilon_{\star}(\upsilon) \end{bmatrix}e,  \end{align}\begin{align}
&\Upsilon_{y}(\upsilon):=\begin{bmatrix}
(\upsilon^{y}_{1}-\upsilon^{y}_{2})I_{y} & & & \\
\upsilon^{y}_{2}I_{y} & (\upsilon^{y}_{2}-\upsilon^{y}_{3})I_{y} &  & \\
&  \ddots & \ddots & \\
& & \upsilon_{\ell_{y}}I_{y} & (\upsilon_{\ell_{y}}-\upsilon_{\ell_{y}+1})I_{y}
\end{bmatrix}, \nonumber \\
&\Upsilon_{u}(\upsilon):=\begin{bmatrix}\begin{smallmatrix}
(\upsilon^{u}_{1}-\upsilon^{u}_{3})I_{u} & & & & \\
\upsilon^{u}_{3}I_{u} & (\upsilon^{u}_{2}-\upsilon^{u}_{4})I_{u} & & & \\
&   \ddots & \ddots &  &\\
& \upsilon^{u}_{\ell_{u}-1}I_{u} & (\upsilon^{u}_{\ell_{u}-2}-\upsilon^{u}_{\ell_{u}})I_{u} && \\
& & \upsilon^{u}_{\ell_{u}}I_{u} & (\upsilon^{u}_{\ell_{u}-1}-\upsilon_{\ell_{y}+1})I_{u}  &\\
& & & \upsilon_{\ell_{y}+1}I_{u} & (\upsilon^{u}_{\ell_{u}}-\upsilon_{\ell_{u}+1})I_{u}
\end{smallmatrix}\end{bmatrix}, \nonumber
\end{align}
where $\upsilon:=(\upsilon_{y}, \upsilon_{u}, \upsilon_{\ell_{y}+1}, \upsilon_{\ell_{u}+1})\in\mathbb{R}^{\ell_{y}+\ell_{u}+2}$ with $\upsilon_{y}:=(\upsilon^{y}_{1}, \ldots, \upsilon^{y}_{\ell_{y}})\in\mathbb{R}^{\ell_{y}}$ and $\upsilon_{u}:=(\upsilon^{u}_{1}, \ldots, \upsilon^{u}_{\ell_{u}})\in\mathbb{R}^{\ell_{u}}$. In particular, $\upsilon_{\ell_{y}+1}\in\mathbb{R}$ and $\upsilon_{\ell_{u}+1}\in\mathbb{R}$ are introduced to denote the transmission from the plant/controller to the field device and from the field device to the plant/controller, respectively.

From the above analysis, the overall system dynamics is
\begin{align}
\label{eqn-30}
\begin{aligned}
\dot{x}&=A_{1}x+B_{1}e, \quad t\in[t_{i}, t_{i+1}], \\
\dot{e}&=A_{2}x+B_{2}e, \quad t\in[t_{i}, t_{i+1}], \\
e(t^{+}_{i})&=\diag(\mathbf{h}_{y}(e(t_{i}), \upsilon(t^{+}_{i})), \mathbf{h}_{u}(e(t_{i}), \upsilon(t^{+}_{i}))),   \\
\upsilon&\sim\mu(\cdot),
\end{aligned}
\end{align}
where $\mathbf{h}_{y}(e(t_{i}), \upsilon(t^{+}_{i}))$ and $\mathbf{h}_{u}(e(t_{i}), \upsilon(t^{+}_{i}))$ are given in \eqref{eqn-29}. Similar to the case of Ethernet-like Networks, we can guarantee the satisfaction of the proposed assumptions and the stability property of the system \eqref{eqn-30}, which is omitted here.

%%%%%%%%%%%%%%%%%%%%%%%%%%%%%%%%%%%%%%%%%%%%%%%%%%%%%%%%%%%%%%%%%%%%%%%%%%%%%%%%%%%%%%%%%%%%%%%%%%%%%%%%%%%%%%%%%%
\section{Numerical Results}
\label{sec-example}
%%%%%%%%%%%%%%%%%%%%%%%%%%%%%%%%%%%%%%%%%%%%%%%%%%%%%%%%%%%%%%%%%%%%%%%%%%%%%%%%%%%%%%%%%%%%%%%%%%%%%%%%%%%%%%%%%%

In this section, we first illustrate how to verify the quantitative tradeoff between the MATI and the probabilities of the transmission success for the case study of the batch reactor in Section \ref{subsec-example1}, and then consider a nonlinear example from a single-link robot arm in Section \ref{subsec-example2}. 

%---------------------------------------------------------------------------------------------------------------------------------------------------------------------------------------------------------------------------------
\subsection{Batch Reactor}
\label{subsec-example1}
%---------------------------------------------------------------------------------------------------------------------------------------------------------------------------------------------------------------------------------

This case study has been considered over the years as a benchmark example in NCS; see, e.g., \citep{Donkers2011stability, Carnevale2007lyapunov, Heemels2010networked, Nesic2004input, Nesic2009unified}. The functions in \eqref{eqn-9} for the batch reactor are given by the linear functions $f(x, e)=A_{1}x+B_{1}e$ and $g(x, e)=A_{2}x+B_{2}e$. The batch reactor has $n_{u}=2$ inputs, $n_{y}=2$ outputs, $n_{\p}=4$ plant states and $n_{\ct}=2$ controller states and $l=2$ nodes. The open-loop system is unstable, and we assume that only the outputs are sent via the network; see \citep{Nesic2004input} for more details on this example. As provided in \citep{Nesic2004input, Heemels2010networked}, the matrices $A_{i}, B_{i}$ with $i=1, 2$ are presented in \eqref{eqn-31}. In the following, both the TOD and RR cases are investigated.

In the TOD case, for the $x$-subsystem, we compute the $\mathcal{L}_{2}$ gain from the input $e$ to the output $A_{2}x$, and derive $\gamma=15.9222$. From \citep{Nesic2004input, Heemels2010networked}, $W(\kappa, e)=|e|$, and thus the gain from $W$ to $A_{2}x$ is also bounded via $\gamma$. In addition, from Proposition \ref{prop-5}, we have $\rho=\sqrt{0.5}, L=15.7300, \lambda=1$. From Propositions \ref{prop-2} and \ref{prop-4}, the protocol and the closed-loop system are UGASp without any constraint on the probability of transmission success.

\begin{figure}[!t]
\begin{center}
\begin{picture}(65, 88)
\put(-65, -22){\resizebox{60mm}{40mm}{\includegraphics[width=2.5in]{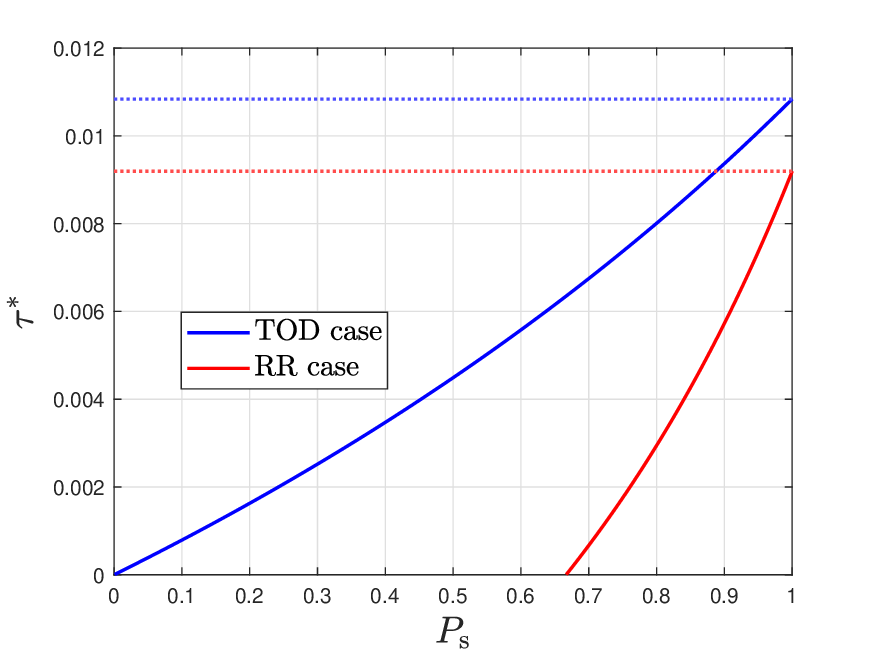}}}
\end{picture}
\end{center}
\caption{Evolution of the MATI bound $\tau^{\ast}$ with different values of $P_{\mathrm{s}}$ in the RR and TOD cases. The dotted lines are for the deterministic case and the solid curves are for the stochastic case.}
\label{fig-1}
\end{figure}

\begin{figure}[!t]
\begin{center}
\begin{picture}(65, 88)
\put(-65, -20){\resizebox{60mm}{40mm}{\includegraphics[width=2.5in]{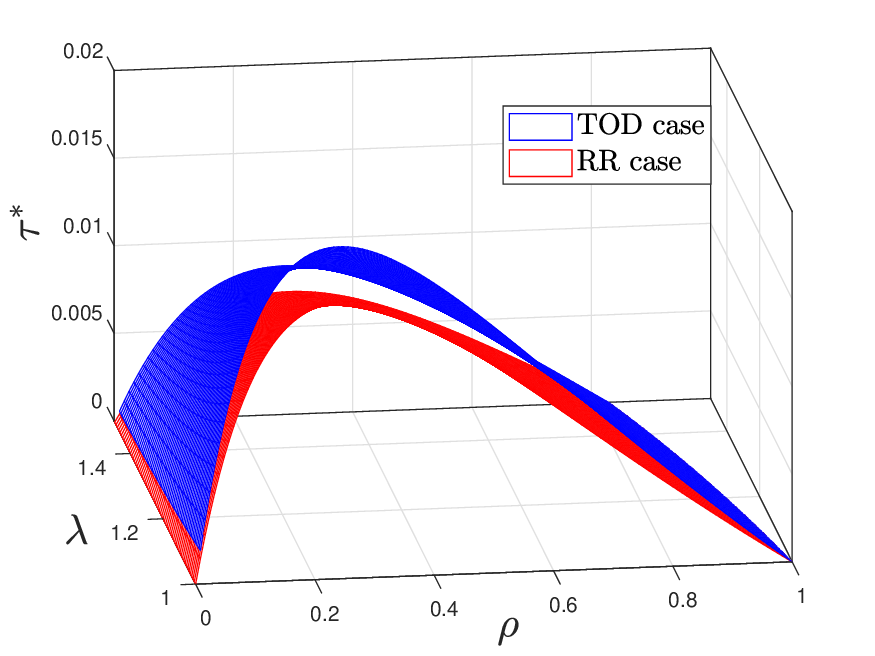}}}
\end{picture}
\end{center}
\caption{Evolution of the MATI bound $\tau^{\ast}$ with different $\lambda$ and $\rho$ in the RR and TOD cases. Here the probability of transmission success is fixed as $0.8$.}
\label{fig-2}
\end{figure}

In the RR case, we follow the similar fashion in the TOD case to compute the $\mathcal{L}_{2}$ gain from the input $e$ to the output $\sqrt{2}A_{2}x$, that is, $\gamma=15.2222\sqrt{2}$. For the RR case, $W(\kappa, e)=\sqrt{a^{2}_{1}(i)e^{2}_{1}+a^{2}_{2}(i)e^{2}_{2}}$, where $a^{2}_{j}(i)=1$ or $2$ for $j=1, 2$ and $i\in\mathbb{N}$. That is, $\max\{a_{1}(i), a_{2}(i)\}\leq\sqrt{2}$, which further implies $\lambda=\sqrt{2}$. From \citep{Heemels2010networked}, we have $\rho=\sqrt{0.5}$ and $L=15.73$. From Proposition \ref{prop-2}, the protocol is UGASp if $P_{\mathrm{s}}>2-\sqrt{2}$. In addition, from Proposition \ref{prop-5}, the closed-loop system is UGASp if $P_{\mathrm{s}}>2/3$. Hence, $P_{\mathrm{s}}>\max\{2-\sqrt{2}, 2/3\}=2/3$.

For both the TOD and RR cases, $\gamma>L$, and from \eqref{eqn-23}, the relation between the MATI bound $\tau^{\ast}$ and the probability $P_{\mathrm{s}}$ is depicted in Figure \ref{fig-1}. Since no constraints are imposed to the probability of transmission success in the TOD case, the relation between $\tau^{\ast}$ and $P_{\mathrm{s}}$ is continuous in $(0, 1)$. The larger the probability $P_{\mathrm{s}}$ is, the larger the MATI bound $\tau^{\ast}$. However, due to $P_{\mathrm{s}}>2/3$ in the RR case, we can see from Figure \ref{fig-1} that $\tau^{\ast}>0$ for $P_{\mathrm{s}}\in(2/3, 1)$. When $P_{\mathrm{s}}=1$, the derived MATI bound $\tau^{\ast}$ is the same as the one for the deterministic case as in \citep{Nesic2004input}.

On the other hand, let the probability of the transmission success be fixed. Here we set $P_{\mathrm{s}}=0.8$, which shows that $\tau^{\ast}>0$ holds for both the RR and TOD cases. With the fixed $P_{\mathrm{s}}$, we investigate the relation between the MATI bound and the parameters $\rho, \lambda$ in \eqref{eqn-18b}-\eqref{eqn-18c}. In this case, the evolution of the MATI bound $\tau^{\ast}$ is depicted in Figure \ref{fig-2}. In particular, if $\rho\in(0, 1)$ is fixed, then $\tau^{\ast}$ decreases with the increase of $\lambda$, which implies the effects of the update at the transmission failure. If $\lambda>1$ is fixed, then $\tau^{\ast}$ increases first and then decreases with the increase of $\rho$, which shows the effects of the update at the transmission success.

%%---------------------------------------------------------------------------------------------------------------------------------------------------------------------------------------------------------------------------------
\subsection{Single-Link Robot Arm}
\label{subsec-example2}
%%---------------------------------------------------------------------------------------------------------------------------------------------------------------------------------------------------------------------------------

\setcounter{equation}{31}
Consider a single-link robot arm with the following dynamics (see also \citep{Postoyan2014tracking})
\begin{align}
\label{eqn-32}
\dot{x}_{1}=2.5x_{2}, \quad \dot{x}_{2}=-a\sin(x_{1})+bu, 
\end{align}
where $x_{1}\in\mathbb{R}$ is the angle, $x_{2}\in\mathbb{R}$ is the rotational velocity, $u$ is the input torque, and $a, b>0$ are the fixed parameters. Both the angle and rotational velocity are measured. In the network-free case, the controller $u=b^{-1}(a\sin(x_{1})-x_{1}-0.5x_{2})$ is designed such that $(x_{1}, x_{2})$ converges asymptotically towards the origin. 

In the following, we address the case where the communication between the plant and the controller is via a network. In the networked case, the emulated controller is $u=b^{-1}(a\sin(\hat{x}_{1})-\hat{x}_{1}-\hat{x}_{2})$, where $\hat{x}_{1}, \hat{x}_{2}\in\mathbb{R}$ are implemented in the ZOH fashion. Let the network have 3 nodes for $x_{1}, x_{2}$ and $u$, respectively. Hence, $\ell=3$, and each node has a binary variable $\upsilon_{1}, \upsilon_{2}, \upsilon_{3}\in\{0, 1\}$ to show whether the packet dropout occurs. For these three nodes, let $\pr\{\upsilon_{l}=1\}=p_{l}\in(0, 1)$ be the probability of the transmission success. 

\begin{figure}[!t]
\begin{center}
\begin{picture}(65, 88)
\put(-65, -22){\resizebox{65mm}{40mm}{\includegraphics[width=2.5in]{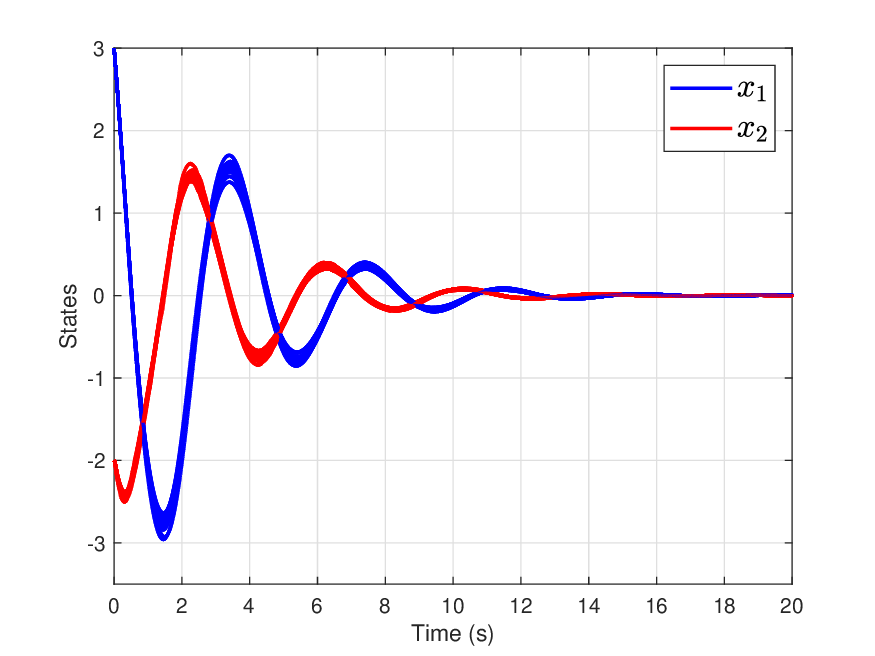}}}
\end{picture}
\end{center}
\caption{Illustration of 15 realizations of the  state trajectories of the system \eqref{eqn-32} in the TOD case.}
\label{fig-3}
\end{figure}

\begin{figure}[!t]
\begin{center}
\begin{picture}(65, 88)
\put(-65, -22){\resizebox{65mm}{40mm}{\includegraphics[width=2.5in]{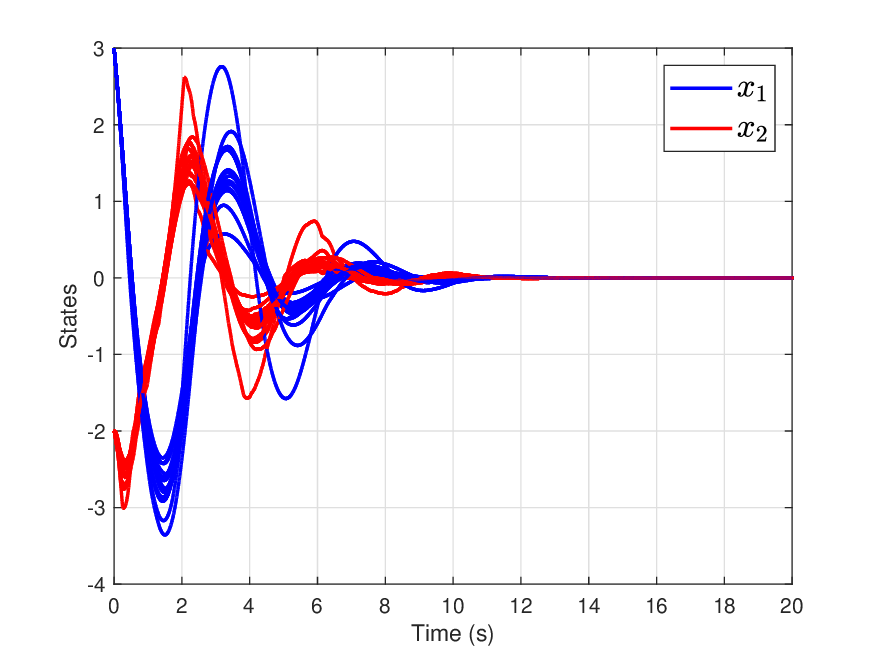}}}
\end{picture}
\end{center}
\caption{Illustration of 15 realizations of the state trajectories of the system \eqref{eqn-32} in the RR case.}
\label{fig-4}
\end{figure}

Next we investigate both the RR and TOD cases to verify Assumptions \ref{asp-2}-\ref{asp-5}. For this purpose, we derive the system model \eqref{eqn-9}. For the continuous-time interval, we have $f(x, e)=(2.5x_{2}, -a(\sin(x_{1})-\sin(x_1+e_{1}))-(x_1+e_{1})-0.5(x_{2}+e_{2})+be_{u})$ and $g(x, e)=(-f(x, e), 0)$. At the transmission time instants, 
\begin{align*}
e(t^{+}_{i})=(1-\bar{\upsilon}(t^{+}_{i}))e(t_{i})+\bar{\upsilon}(t^{+}_{i})h(i, e(t_{i})), 
\end{align*}
where $\bar{\upsilon}(t)=\min\{\upsilon_{1}(t), \upsilon_{2}(t), \upsilon_{3}(t)\}$ and $h(i, e)$ is the same as the one in \eqref{eqn-5}. That is, if one node is chosen and no packet dropout occurs, then the update of the network-induced error is the same as \eqref{eqn-5}. For the RR case, the function $W$ is defined as $W(\kappa, e)=\sqrt{\sum_{i=\kappa}^{\infty}|\phi(i, \kappa, e)|^2}$, where $\phi(i, \kappa, e)$ is the solution to $e^{+}=h(i, e)$ at time $i$ starting at time $\kappa$ with the initial condition $e$. From \citep[Proposition 4]{Nesic2004input}, we have that Assumption \ref{eqn-2} holds with $\alpha_{1W}(s)=s$, $\alpha_{2W}(s)=\sqrt{3}s$, $\rho=\sqrt{2/3}$ and $\lambda=\sqrt{3}$. For the TOD case, we set $W(e):=|e|$ directly, and thus Assumption \ref{eqn-2} holds with $\alpha_{1W}(s)=\alpha_{2W}(s)=s$, $\rho=\sqrt{2/3}$ and $\lambda=1$. In addition, we have that $|g(x, e)|\leq2.5|x_{2}|+|x_{1}+x_{2}|+D|e|$, where $D:=\sqrt{3}\max\{1+a, b\}$. From \citep[Lemma V.4]{Heemels2010networked}, $|\partial W(\kappa, e)/\partial e|\leq M$ for all $\kappa\in\mathbb{N}$ and  almost all $e\in\mathbb{R}^{3}$. Here, $M=\sqrt{3}$ for the RR case and $M=1$ for the TOD case. Hence, $|\langle\partial W(\kappa, e)/\partial e, g(x, e)\rangle|\leq M(DW(\kappa, e)+2.5|x_{2}|+|x_{1}+x_{2}|)$ for almost all $e\in\mathbb{R}^{3}$ and all $(x, \kappa)\in\mathbb{R}^{2}\times\mathbb{N}$. That is, Assumption \ref{asp-3} holds with $L:=MD$ and $H(x):=M(2.5|x_{2}|+|x_{1}+x_{2}|)$. 

In order to show Assumption \ref{asp-4}, we define $V(x):=\mathfrak{a}x_{1}^2+\mathfrak{b}x_{1}x_{2}+\mathfrak{c}x_{2}^2$, where $\mathfrak{a}, \mathfrak{b}, \mathfrak{c}\in\mathbb{R}$ are such that \eqref{eqn-20} holds. Let $a(\sin(x_{1})-\sin (x_{1}+e_{1}))=\bar{a}e_{1}$ with a varying parameter $\bar{a}\in[-a, a]$. Hence, $\langle\nabla V(x), f(x, e)\rangle\leq-\mathfrak{b}x^2_{1}-(2\mathfrak{c}-\mathfrak{b})x^2_{2}+(2\mathfrak{a}
-2\mathfrak{c}-\mathfrak{b})x_{1}x_{2}+(\mathfrak{b}x_{1}+2\mathfrak{c}x_{2})((-\bar{a}-1)e_{1}-e_{2}+be_{u})$. Using the fact that $xy\leq\eta x^2/2+y^2/(2\eta)$ for $x, y\in\mathbb{R}_{+}$ and $\eta>0$, we derive that $\langle\nabla V(x), f(x, e)\rangle\leq-\mathfrak{b}x^2_{1}-(2\mathfrak{c}-\mathfrak{b})x^2_{2}+(2\mathfrak{a}-2\mathfrak{c}-\mathfrak{b})x_{1}x_2+(2\mathfrak{c}x_{2}+\mathfrak{b}x_{1})^2/(2\eta)+\eta D^2|e|^2/2$, where $\eta>0$ is arbitrary. Hence, if $\mathfrak{a}, \mathfrak{b}, \mathfrak{c}\in\mathbb{R}$ are such that \eqref{eqn-20} is satisfied and $-\varepsilon|x|^2-H^2(x)\geq-\mathfrak{b}x_{1}^2-(2\mathfrak{c}-\mathfrak{b})x_{2}^2 +(2\mathfrak{a}-2\mathfrak{c}-\mathfrak{b}) x_{1}x_{2}+(2\mathfrak{c}x_{2}+\mathfrak{b}x_{1})^2/(2\eta)$, where $\varepsilon>0$ is arbitrary, then we conclude that Assumption \ref{asp-4} holds with $\gamma:=\sqrt{\eta D^2/2+\varepsilon}$ and $\theta(s):=\varepsilon s^{2}$. 

Let $a=0.5\cdot9.81, b=2$ and we set $\mathfrak{a}=8, \mathfrak{b}=6, \mathfrak{c}=4, \eta=7$ and $\varepsilon=0.005$. We can compute that $L=17.7150, \gamma=19.1344$ for the RR case and $L=10.2278, \gamma=19.1345$ for the TOD case. For the three nodes, the probabilities of the transmission success are set as $0.2, 0.4, 0.6$, respectively. Hence, at each transmission time, the probability of the transmission success is more than 0.808. From \eqref{eqn-23}, the upper bound of the MATI is $\tau^{\ast}=0.006874$ for the TOD case and $\tau^{\ast}=0.005482$ for the RR case. Let $\tau_{\mati}=0.005$, and Figures \ref{fig-3}-\ref{fig-4} show the convergence of the system state in the TOD and RR cases, respectively. From 15 different realizations, we can see the guarantee of the stochastic stability property of the closed-loop system with stochastic channels.

%%%%%%%%%%%%%%%%%%%%%%%%%%%%%%%%%%%%%%%%%%%%%%%%%%%%%%%%%%%%%%%%%%%%%%%%%%%%%%%%%%%%%%%%%%%%%%%%%%%%%%%%%%%%%%%%%%
\section{Conclusion}
\label{sec-conclusion}
%%%%%%%%%%%%%%%%%%%%%%%%%%%%%%%%%%%%%%%%%%%%%%%%%%%%%%%%%%%%%%%%%%%%%%%%%%%%%%%%%%%%%%%%%%%%%%%%%%%%%%%%%%%%%%%%%%

In this paper, we proposed a novel stochastic hybrid model for networked control systems with random packet dropouts, and then provided conditions to guarantee uniform global asymptotic stability in probability of the whole system. In particular, the transmission times are deterministic whereas the dropouts of the transmitted information are stochastic, which results in a novel stochastic scheduling protocol. Based on stochastic hybrid system theory, the proposed model was amenable to elegant analysis, which generalizes the existing results in \citep{Carnevale2007lyapunov} directly. Future work will focus on specific networks and more general cases including networked control systems with time delays and the case of discrete-time controllers.

%%%%%%%%%%%%%%%%%%%%%%%%%%%%%%%%%%%%%%%%%%%%%%%%%%%%%%%%%%%%%%%%%%%%%%%%%%%%%%%%%%%%%%%%%%%%%%%%%%%%%%%%%%%%%%%%%%
\appendix
\section{Proof of Theorem \ref{thm-1}}
\label{sec-appendixA}

\setcounter{equation}{0}
\renewcommand{\theequation}{A.\arabic{equation}}

Let $\phi: [0, \tau^{\ast}]\rightarrow\mathbb{R}$ be the solution to
\begin{align}
\label{eqn-A1}
\dot{\phi}=-2L\phi-\gamma(\phi^{2}+1), \quad \phi(0)=\rho^{-1},
\end{align}
where $L\geq0$ and $\gamma>0$ are from Assumptions \ref{asp-3}-\ref{asp-4}, respectively. Note that $\tau^{\ast}$ in \eqref{eqn-23} corresponds to the time it takes for the solution to \eqref{eqn-A1} to decrease from the initial condition $\phi(0)=\rho^{-1}$ to $\phi(\tau^{\ast})=\rho^{\prime}$, where $\rho^{\prime}$ is defined below \eqref{eqn-23}. Based on this goal and \citep[Lemma 2]{Carnevale2007lyapunov}, the following claim is derived directly.

\begin{claim}
\label{claim-1}
For all $\tau\in[0, \tau^{\ast}]$, $\phi(\tau)\in[\rho^{\prime}, \rho^{-1}]$.
\end{claim}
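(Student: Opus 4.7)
The plan is to exploit the strict monotonicity of the solution $\phi$ to the Riccati-type ODE \eqref{eqn-A1} and the very definition of $\tau^{\ast}$ as the hitting time of $\rho^{\prime}$ starting from $\rho^{-1}$. Concretely, I would first observe that, as long as $\phi(\tau)\geq 0$, the right-hand side of \eqref{eqn-A1} satisfies
\begin{equation*}
-2L\phi(\tau)-\gamma(\phi^{2}(\tau)+1)\leq -\gamma<0,
\end{equation*}
so $\phi$ is strictly decreasing on any interval where it remains nonnegative. Since $\phi(0)=\rho^{-1}>0$ (because $\rho\in(0,1)$ from Assumption \ref{asp-2}), continuity and strict monotonicity guarantee that $\phi(\tau)$ stays positive at least until it would hit zero.

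Next I would invoke Assumption \ref{asp-5}, namely $\rho^{2}P_{\mathrm{s}}+\lambda^{2}P_{\mathrm{f}}<1$, which gives $\rho^{\prime}=\rho^{-1}(\rho^{2}P_{\mathrm{s}}+\lambda^{2}P_{\mathrm{f}})\in(0,\rho^{-1})$. Thus the target level $\rho^{\prime}$ is strictly between $0$ and the initial value $\rho^{-1}$. Combined with strict monotonicity, the implicit function theorem (or a direct separation-of-variables computation) produces a unique, finite first hitting time of $\rho^{\prime}$, and this is precisely $\tau^{\ast}$: the three branches of the closed-form expression in \eqref{eqn-23} correspond to $\gamma>L$, $\gamma=L$, $\gamma<L$ and come from integrating $d\phi/(-2L\phi-\gamma(\phi^{2}+1))$ between $\rho^{-1}$ and $\rho^{\prime}$, exactly as in \citep[Lemma 2]{Carnevale2007lyapunov}.

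Finally, I would assemble the two ingredients: on $[0,\tau^{\ast}]$, $\phi$ is strictly decreasing with $\phi(0)=\rho^{-1}$ and $\phi(\tau^{\ast})=\rho^{\prime}$, so by the intermediate value theorem $\phi(\tau)\in[\rho^{\prime},\rho^{-1}]$ for every $\tau\in[0,\tau^{\ast}]$, which is exactly the claim. The only subtle step is verifying the endpoint identity $\phi(\tau^{\ast})=\rho^{\prime}$; this is essentially a bookkeeping exercise matching the integration of \eqref{eqn-A1} with the three cases of \eqref{eqn-23}, and this is the one place where the proof leans directly on \citep[Lemma 2]{Carnevale2007lyapunov}. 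I do not anticipate any genuine obstacle—the argument reduces to monotonicity plus the explicit construction of $\tau^{\ast}$—so the claim is essentially immediate once Assumption \ref{asp-5} is in place.
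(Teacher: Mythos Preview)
Your proposal is correct and takes essentially the same approach as the paper: the paper also identifies $\tau^{\ast}$ as the time it takes for $\phi$ to decrease from $\phi(0)=\rho^{-1}$ to $\phi(\tau^{\ast})=\rho^{\prime}$ and then cites \citep[Lemma~2]{Carnevale2007lyapunov} to conclude directly. Your write-up simply spells out the monotonicity argument that the paper leaves implicit.
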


Define the following function for the system \eqref{eqn-10}:
\begin{equation}
\label{eqn-A2}
U(\xi):=V(x)+\gamma\phi(\tau)W^{2}(\kappa, e).
\end{equation}
From \eqref{eqn-A1}, Assumptions \ref{asp-2}-\ref{asp-4} and the continuity of $F$ and $G$ in \eqref{eqn-10}, it is easy to check that $U$ in \eqref{eqn-A2} is continuous and locally Lipschitz. In the following, we first show that $U(\xi)$ is suitable Lyapunov function for the system \eqref{eqn-10} (i.e., \textbf{Step 1}), and then derive the decrease of $U(\xi)$ in the hybrid time domain (i.e., \textbf{Step 2}). In this way, we conclude that all three conditions in Theorem \ref{thm-1} are satisfied for the function $U$ in \eqref{eqn-A2}.

\textbf{Step 1: Positive Definiteness and Radial Unboundedness of $U(\xi)$.} For the Lyapunov function $U$ defined in \eqref{eqn-A2}, we obtain from Assumptions \ref{asp-2}-\ref{asp-4} that
\begin{align*}
\alpha_{1V}(|x|)+\gamma\phi(\tau_{\mati})\alpha^{2}_{1W}(|e|)\leq U(\xi)&\leq\alpha_{2V}(|x|)+\gamma\phi(0)\alpha^{2}_{2W}(|e|).
\end{align*}
From \eqref{eqn-22}, we have $\phi(\tau_{\mati})\geq(\lambda^{2}P_{\mathrm{f}}+\rho^{2}P_{\mathrm{s}})\phi(0)$. Furthermore, from \citep[Remark 2.3]{Laila2002lyapunov}, there exist $\alpha_{1}, \alpha_{2}\in\mathcal{K}_{\infty}$ such that
\begin{align}
\label{eqn-A3}
\alpha_{1}(|\xi|_{\mathcal{A}})\leq U(\xi)\leq\alpha_{2}(|\xi|_{\mathcal{A}}), \quad \xi\in C\cup D\cup\mathcal{G}(D\times\mathcal{V}),
\end{align}
where $\alpha_{1}(v):=\min\{\alpha_{1V}(v/2), (\lambda^{2}P_{\mathrm{f}}+\rho^{2}P_{\mathrm{s}})\gamma\phi(0)\alpha^{2}_{1W}(v/2)\}$ and
$\alpha_{2}(v):=\max\{\alpha_{2V}(v), \gamma\phi(0)\alpha^{2}_{2W}(v)\}$. Hence, item (i) is valid.

\textbf{Step 2: Decreasing of $U(\xi)$ on the Flow and at Jumps.} Consider the evolution of $U$ in the continuous-time intervals. For all $\xi\in C$, the differential\footnote{Because of $\kappa$, $U$ is not differential almost everywhere. Since $\dot{\kappa}=0$ in \eqref{eqn-12}, we can consider the differential of $U$ with a slight abuse of terminology.} of $U(\xi)$ satisfies
\begin{align}
\label{eqn-A4}
&U^{\circ}(\xi; \mathcal{F}(\xi))  \nonumber\\
&\leq-\rho(x)-\theta(W(\kappa, e))-H^{2}(x)+\gamma^{2}W^{2}(\kappa, e)  \nonumber\\
&\quad +\gamma\dot{\phi}(\tau)W^{2}(\kappa, e)+2\gamma\phi(\tau)W(\kappa, e)[LW(\kappa, e)+H(x)] \nonumber\\
&=-\rho(x)-\theta(W(\kappa, e))-[H(x)-\gamma\phi(\tau)W(\kappa, e)]^{2}  \nonumber\\
&\leq-\rho(x)-\theta(W(\kappa, e)),
\end{align}
where the first ``$\leq$'' is from Assumptions \ref{asp-3}-\ref{asp-4}. From \eqref{eqn-A4}, $U(\xi)$ decreases strictly in the flow set, which implies the existence of the function $\varphi\in\mathcal{PD}(\mathcal{A})$ such that item (ii) is satisfied. For all $\xi\in D$, from Assumptions \ref{asp-2} and \ref{asp-5}, the jump of $U(\xi)$ satisfies
\begin{align}
\label{eqn-A5}
&\int_{\upsilon\in\mathcal{V}}U(\mathcal{G}(\xi, \upsilon))\mu(d\upsilon)  \nonumber\\
&=\int_{\upsilon\in\mathcal{V}}(V(x)+\gamma\phi(0)W(\kappa+1, \mathbf{h}(\kappa, e, \upsilon)))\mu(d\upsilon) \nonumber\\
&=P_{\mathrm{f}}(V(x)+\gamma\phi(0)W^{2}(\kappa+1, e))  \nonumber\\
&\quad +P_{\mathrm{s}}(V(x)+\gamma\phi(0)W^{2}(\kappa+1, \mathfrak{h}(\kappa+1, e)) \nonumber\\
&\leq V(x)+(\rho^{2}P_{\mathrm{s}}+\lambda^{2}P_{\mathrm{f}})\gamma\phi(0)W^{2}(\kappa, e)  \nonumber\\
&\leq U(\xi).
\end{align}
Hence, item (iii) in Theorem \ref{thm-1} holds and the proof is completed.

%%%%%%%%%%%%%%%%%%%%%%%%%%%%%%%%%%%%%%%%%%%%%%%%%%%%%%%%%%%%%%%%%%%%%%%%%%%%%%%%%%%%%%%%%%%%%%%%%%%%%%%%%%%%%%%%%%


\begin{thebibliography}{38}
\bibitem[{Gupta and Chow(2010)}]{Gupta2010networked}
\bibinfo{author}{R.~A. Gupta}, \bibinfo{author}{M.-Y. Chow},
  \bibinfo{title}{Networked control system: {O}verview and research trends},
  \bibinfo{journal}{IEEE Trans. Ind. Electron.}
  \bibinfo{volume}{57}~(\bibinfo{number}{7}) (\bibinfo{year}{2010})
  \bibinfo{pages}{2527--2535}.

\bibitem[{Donkers et~al.(2011)Donkers, Heemels, Van De~Wouw, and
  Hetel}]{Donkers2011stability}
\bibinfo{author}{M.~Donkers}, \bibinfo{author}{W.~Heemels},
  \bibinfo{author}{N.~Van De~Wouw}, \bibinfo{author}{L.~Hetel},
  \bibinfo{title}{Stability analysis of networked control systems using a
  switched linear systems approach}, \bibinfo{journal}{IEEE Trans. Autom.
  Control} \bibinfo{volume}{56}~(\bibinfo{number}{9}) (\bibinfo{year}{2011})
  \bibinfo{pages}{2101--2115}.

\bibitem[{Ne{\v{s}}i{\'c} and Liberzon(2009)}]{Nesic2009unified}
\bibinfo{author}{D.~Ne{\v{s}}i{\'c}}, \bibinfo{author}{D.~Liberzon},
  \bibinfo{title}{A unified framework for design and analysis of networked and
  quantized control systems}, \bibinfo{journal}{IEEE Trans. Autom. Control}
  \bibinfo{volume}{54}~(\bibinfo{number}{4}) (\bibinfo{year}{2009})
  \bibinfo{pages}{732--747}.

\bibitem[{Heemels et~al.(2010)Heemels, Teel, van~de Wouw, and
  Ne{\v{s}}i{\'c}}]{Heemels2010networked}
\bibinfo{author}{W.~M.~H. Heemels}, \bibinfo{author}{A.~R. Teel},
  \bibinfo{author}{N.~van~de Wouw}, \bibinfo{author}{D.~Ne{\v{s}}i{\'c}},
  \bibinfo{title}{Networked control systems with communication constraints:
  {T}radeoffs between transmission intervals, delays and performance},
  \bibinfo{journal}{IEEE Trans. Autom. Control}
  \bibinfo{volume}{55}~(\bibinfo{number}{8}) (\bibinfo{year}{2010})
  \bibinfo{pages}{1781--1796}.

\bibitem[{Carnevale et~al.(2007)Carnevale, Teel, and
  Ne{\v{s}}i{\'c}}]{Carnevale2007lyapunov}
\bibinfo{author}{D.~Carnevale}, \bibinfo{author}{A.~R. Teel},
  \bibinfo{author}{D.~Ne{\v{s}}i{\'c}}, \bibinfo{title}{A {L}yapunov proof of
  an improved maximum allowable transfer interval for networked control
  systems}, \bibinfo{journal}{IEEE Trans. Autom. Control}
  \bibinfo{volume}{52}~(\bibinfo{number}{5}) (\bibinfo{year}{2007})
  \bibinfo{pages}{892}.

\bibitem[{Ne{\v{s}}i{\'c} and Teel(2004{\natexlab{a}})}]{Nesic2004input}
\bibinfo{author}{D.~Ne{\v{s}}i{\'c}}, \bibinfo{author}{A.~R. Teel},
  \bibinfo{title}{Input-output stability properties of networked control
  systems}, \bibinfo{journal}{IEEE Trans. Autom. Control}
  \bibinfo{volume}{49}~(\bibinfo{number}{10})
  (\bibinfo{year}{2004}{\natexlab{a}}) \bibinfo{pages}{1650--1667}.

\bibitem[{Antunes et~al.(2013)Antunes, Hespanha, and
  Silvestre}]{Antunes2013stability}
\bibinfo{author}{D.~Antunes}, \bibinfo{author}{J.~Hespanha},
  \bibinfo{author}{C.~Silvestre}, \bibinfo{title}{Stability of networked
  control systems with asynchronous renewal links: {A}n impulsive systems
  approach}, \bibinfo{journal}{Automatica}
  \bibinfo{volume}{49}~(\bibinfo{number}{2}) (\bibinfo{year}{2013})
  \bibinfo{pages}{402--413}.

\bibitem[{Shousong and Qixin(2003)}]{Hu2003stochastic}
\bibinfo{author}{H.~Shousong}, \bibinfo{author}{Z.~Qixin},
  \bibinfo{title}{Stochastic optimal control and analysis of stability of
  networked control systems with long delay}, \bibinfo{journal}{Automatica}
  \bibinfo{volume}{39}~(\bibinfo{number}{11}) (\bibinfo{year}{2003})
  \bibinfo{pages}{1877--1884}.

\bibitem[{Hespanha and Teel(2006)}]{Hespanha2006stochastic}
\bibinfo{author}{J.~P. Hespanha}, \bibinfo{author}{A.~R. Teel},
  \bibinfo{title}{Stochastic impulsive systems driven by renewal processes},
  in: \bibinfo{booktitle}{International Symposium on Mathematical Theory of
  Networks and Systems}, \bibinfo{year}{2006}.

\bibitem[{Donkers et~al.(2012)Donkers, Heemels, Bernardini, Bemporad, and
  Shneer}]{Donkers2012stability}
\bibinfo{author}{M.~Donkers}, \bibinfo{author}{W.~Heemels},
  \bibinfo{author}{D.~Bernardini}, \bibinfo{author}{A.~Bemporad},
  \bibinfo{author}{V.~Shneer}, \bibinfo{title}{Stability analysis of stochastic
  networked control systems}, \bibinfo{journal}{Automatica}
  \bibinfo{volume}{48}~(\bibinfo{number}{5}) (\bibinfo{year}{2012})
  \bibinfo{pages}{917--925}.

\bibitem[{Liu et~al.(2021{\natexlab{a}})Liu, Wu, Yue, and
  Park}]{Liu2021stabilization}
\bibinfo{author}{J.~Liu}, \bibinfo{author}{Z.-G. Wu}, \bibinfo{author}{D.~Yue},
  \bibinfo{author}{J.~H. Park}, \bibinfo{title}{Stabilization of networked
  control systems with hybrid-driven mechanism and probabilistic cyber
  attacks}, \bibinfo{journal}{IEEE Trans. Syst. Man Cybern. Syst.}
  \bibinfo{volume}{51}~(\bibinfo{number}{2})
  (\bibinfo{year}{2021}{\natexlab{a}}) \bibinfo{pages}{943--953}.

\bibitem[{Hu et~al.(2017)Hu, Deng, Xing, and Li}]{Hu2017modeling}
\bibinfo{author}{Z.~Hu}, \bibinfo{author}{F.~Deng}, \bibinfo{author}{M.~Xing},
  \bibinfo{author}{J.~Li}, \bibinfo{title}{Modeling and control of {I}t{\^o}
  stochastic networked control systems with random packet dropouts subject to
  time-varying sampling}, \bibinfo{journal}{IEEE Trans. Autom. Control}
  \bibinfo{volume}{62}~(\bibinfo{number}{8}) (\bibinfo{year}{2017})
  \bibinfo{pages}{4194--4201}.

\bibitem[{Liu et~al.(2021{\natexlab{b}})Liu, Teel, and Sun}]{Liu2021event}
\bibinfo{author}{K.-Z. Liu}, \bibinfo{author}{A.~R. Teel},
  \bibinfo{author}{X.-M. Sun}, \bibinfo{title}{Event-triggered nonlinear
  systems with stochastic dynamics, transmission times, and protocols},
  \bibinfo{journal}{IEEE Trans. Autom. Control}
  \bibinfo{volume}{67}~(\bibinfo{number}{4})
  (\bibinfo{year}{2021}{\natexlab{b}}) \bibinfo{pages}{1973--1979}.

\bibitem[{van~de Wouw et~al.(2012)van~de Wouw, Ne{\v{s}}i{\'c}, and
  Heemels}]{Van2012discrete}
\bibinfo{author}{N.~van~de Wouw}, \bibinfo{author}{D.~Ne{\v{s}}i{\'c}},
  \bibinfo{author}{W.~Heemels}, \bibinfo{title}{A discrete-time framework for
  stability analysis of nonlinear networked control systems},
  \bibinfo{journal}{Automatica} \bibinfo{volume}{48}~(\bibinfo{number}{6})
  (\bibinfo{year}{2012}) \bibinfo{pages}{1144--1153}.

\bibitem[{Park et~al.(2009)Park, Di~Marco, Soldati, Fischione, and
  Johansson}]{Park2009generalized}
\bibinfo{author}{P.~Park}, \bibinfo{author}{P.~Di~Marco},
  \bibinfo{author}{P.~Soldati}, \bibinfo{author}{C.~Fischione},
  \bibinfo{author}{K.~H. Johansson}, \bibinfo{title}{A generalized {M}arkov
  chain model for effective analysis of slotted {IEEE} 802.15.4}, in:
  \bibinfo{booktitle}{IEEE Int. Conf. Mobile Ad-hoc Sensor Syst.},
  \bibinfo{organization}{IEEE}, \bibinfo{pages}{130--139},
  \bibinfo{year}{2009}.

\bibitem[{Li and Niu(2021)}]{Li2021output}
\bibinfo{author}{J.~Li}, \bibinfo{author}{Y.~Niu},
  \bibinfo{title}{Output-feedback-based sliding mode control for networked
  control systems subject to packet loss and quantization},
  \bibinfo{journal}{Asian Journal of Control}
  \bibinfo{volume}{23}~(\bibinfo{number}{1}) (\bibinfo{year}{2021})
  \bibinfo{pages}{289--297}.

\bibitem[{Wang et~al.(2007)Wang, Yang, Ho, and Liu}]{Wang2007robust}
\bibinfo{author}{Z.~Wang}, \bibinfo{author}{F.~Yang}, \bibinfo{author}{D.~W.
  Ho}, \bibinfo{author}{X.~Liu}, \bibinfo{title}{Robust $\mathcal{H}_{\infty}$
  control for networked systems with random packet losses},
  \bibinfo{journal}{IEEE Transactions on Systems, Man, and Cybernetics, Part B:
  Cybernetics} \bibinfo{volume}{37}~(\bibinfo{number}{4})
  (\bibinfo{year}{2007}) \bibinfo{pages}{916--924}.

\bibitem[{Schenato et~al.(2007)Schenato, Sinopoli, Franceschetti, Poolla, and
  Sastry}]{Schenato2007foundations}
\bibinfo{author}{L.~Schenato}, \bibinfo{author}{B.~Sinopoli},
  \bibinfo{author}{M.~Franceschetti}, \bibinfo{author}{K.~Poolla},
  \bibinfo{author}{S.~S. Sastry}, \bibinfo{title}{Foundations of control and
  estimation over lossy networks}, \bibinfo{journal}{Proceedings of the IEEE}
  \bibinfo{volume}{95}~(\bibinfo{number}{1}) (\bibinfo{year}{2007})
  \bibinfo{pages}{163--187}.

\bibitem[{Maass et~al.(2019)Maass, Ne{\v{s}}i{\'c}, Postoyan, and
  Dower}]{Maass2019lp}
\bibinfo{author}{A.~I. Maass}, \bibinfo{author}{D.~Ne{\v{s}}i{\'c}},
  \bibinfo{author}{R.~Postoyan}, \bibinfo{author}{P.~M. Dower},
  \bibinfo{title}{$\mathcal{L}_{p}$ stability of networked control systems
  implemented on {W}ireless{HART}}, \bibinfo{journal}{Automatica}
  \bibinfo{volume}{109}.

\bibitem[{Tabbara and Ne{\v{s}}i{\'c}(2008)}]{Tabbara2008input}
\bibinfo{author}{M.~Tabbara}, \bibinfo{author}{D.~Ne{\v{s}}i{\'c}},
  \bibinfo{title}{Input--output stability of networked control systems with
  stochastic protocols and channels}, \bibinfo{journal}{IEEE Trans. Autom.
  Control} \bibinfo{volume}{53}~(\bibinfo{number}{5}) (\bibinfo{year}{2008})
  \bibinfo{pages}{1160--1175}.

\bibitem[{Maass and Ne{\v{s}}i{\'c}(2019)}]{Maass2019stabilization}
\bibinfo{author}{A.~I. Maass}, \bibinfo{author}{D.~Ne{\v{s}}i{\'c}},
  \bibinfo{title}{Stabilization of non-linear networked control systems closed
  over a lossy WirelessHART network}, \bibinfo{journal}{IEEE Control Systems
  Letters} \bibinfo{volume}{3}~(\bibinfo{number}{4}) (\bibinfo{year}{2019})
  \bibinfo{pages}{996--1001}.

\bibitem[{Quevedo and Ne{\v{s}}i{\'c}(2012)}]{Quevedo2012robust}
\bibinfo{author}{D.~E. Quevedo}, \bibinfo{author}{D.~Ne{\v{s}}i{\'c}},
  \bibinfo{title}{Robust stability of packetized predictive control of
  nonlinear systems with disturbances and {M}arkovian packet losses},
  \bibinfo{journal}{Automatica} \bibinfo{volume}{48}~(\bibinfo{number}{8})
  (\bibinfo{year}{2012}) \bibinfo{pages}{1803--1811}.

\bibitem[{Cali et~al.(1998)Cali, Conti, and Gregori}]{Cali1998ieee}
\bibinfo{author}{F.~Cali}, \bibinfo{author}{M.~Conti},
  \bibinfo{author}{E.~Gregori}, \bibinfo{title}{{IEEE} 802.11 wireless {LAN}:
  capacity analysis and protocol enhancement}, in:
  \bibinfo{booktitle}{Proceedings of IEEE INFOCOM 1998},
  vol.~\bibinfo{volume}{1}, \bibinfo{organization}{IEEE},
  \bibinfo{pages}{142--149}, \bibinfo{year}{1998}.

\bibitem[{Wang et~al.(2014)Wang, Ne{\v{s}}i{\'c}, and
  Postoyan}]{Wang2014emulated}
\bibinfo{author}{W.~Wang}, \bibinfo{author}{D.~Ne{\v{s}}i{\'c}},
  \bibinfo{author}{R.~Postoyan}, \bibinfo{title}{Emulated controller design for
  networked control systems implemented on {F}lex{R}ay}, in:
  \bibinfo{booktitle}{Proc. IEEE Conf. Decis. Control},
  \bibinfo{organization}{IEEE}, \bibinfo{pages}{5272--5277},
  \bibinfo{year}{2014}.

\bibitem[{Teel et~al.(2014)Teel, Subbaraman, and Sferlazza}]{Teel2014stability}
\bibinfo{author}{A.~R. Teel}, \bibinfo{author}{A.~Subbaraman},
  \bibinfo{author}{A.~Sferlazza}, \bibinfo{title}{Stability analysis for
  stochastic hybrid systems: {A} survey}, \bibinfo{journal}{Automatica}
  \bibinfo{volume}{50}~(\bibinfo{number}{10}) (\bibinfo{year}{2014})
  \bibinfo{pages}{2435--2456}.

\bibitem[{Subbaraman and Teel(2017)}]{Subbaraman2017robust}
\bibinfo{author}{A.~Subbaraman}, \bibinfo{author}{A.~R. Teel},
  \bibinfo{title}{Robust global recurrence for a class of stochastic hybrid
  systems}, \bibinfo{journal}{Nonlinear Anal.: Hybrid Syst.}
  \bibinfo{volume}{25} (\bibinfo{year}{2017}) \bibinfo{pages}{283--297}.

\bibitem[{Teel(2013{\natexlab{a}})}]{Teel2013lyapunov}
\bibinfo{author}{A.~R. Teel}, \bibinfo{title}{Lyapunov conditions certifying
  stability and recurrence for a class of stochastic hybrid systems},
  \bibinfo{journal}{Annu. Rev. Control}
  \bibinfo{volume}{37}~(\bibinfo{number}{1})
  (\bibinfo{year}{2013}{\natexlab{a}}) \bibinfo{pages}{1--24}.

\bibitem[{Goebel et~al.(2012)Goebel, Sanfelice, and Teel}]{Goebel2012hybrid}
\bibinfo{author}{R.~Goebel}, \bibinfo{author}{R.~G. Sanfelice},
  \bibinfo{author}{A.~R. Teel}, \bibinfo{title}{Hybrid Dynamical Systems:
  Modeling, Stability, and Robustness}, \bibinfo{publisher}{Princeton
  University Press}, \bibinfo{year}{2012}.

\bibitem[{Subbaraman and Teel(2016)}]{Subbaraman2016recurrence}
\bibinfo{author}{A.~Subbaraman}, \bibinfo{author}{A.~R. Teel},
  \bibinfo{title}{Recurrence principles and their application to stability
  theory for a class of stochastic hybrid systems}, \bibinfo{journal}{IEEE
  Trans. Autom. Control} \bibinfo{volume}{61}~(\bibinfo{number}{11})
  (\bibinfo{year}{2016}) \bibinfo{pages}{3477--3492}.

\bibitem[{Ne{\v{s}}i{\'c} and Teel(2004{\natexlab{b}})}]{Nesic2004input1}
\bibinfo{author}{D.~Ne{\v{s}}i{\'c}}, \bibinfo{author}{A.~R. Teel},
  \bibinfo{title}{Input-output stability properties of networked control
  systems}, \bibinfo{journal}{IEEE Trans. Autom. Control}
  \bibinfo{volume}{49}~(\bibinfo{number}{10})
  (\bibinfo{year}{2004}{\natexlab{b}}) \bibinfo{pages}{1650--1667}.

\bibitem[{Walsh et~al.(2002)Walsh, Ye, and Bushnell}]{Walsh2002stability}
\bibinfo{author}{G.~C. Walsh}, \bibinfo{author}{H.~Ye}, \bibinfo{author}{L.~G.
  Bushnell}, \bibinfo{title}{Stability analysis of networked control systems},
  \bibinfo{journal}{IEEE Trans. Control Syst. Technol.}
  \bibinfo{volume}{10}~(\bibinfo{number}{3}) (\bibinfo{year}{2002})
  \bibinfo{pages}{438--446}.

\bibitem[{Park et~al.(2013)Park, Fischione, and Johansson}]{Park2013modeling}
\bibinfo{author}{P.~Park}, \bibinfo{author}{C.~Fischione},
  \bibinfo{author}{K.~H. Johansson}, \bibinfo{title}{Modeling and stability
  analysis of hybrid multiple access in the {IEEE} 802.15.4 protocol},
  \bibinfo{journal}{ACM Trans. Sens. Netw.}
  \bibinfo{volume}{9}~(\bibinfo{number}{2}) (\bibinfo{year}{2013})
  \bibinfo{pages}{13}.

\bibitem[{Rockafellar and Wets(2009)}]{Rockafellar2009variational}
\bibinfo{author}{R.~T. Rockafellar}, \bibinfo{author}{R.~J.-B. Wets},
  \bibinfo{title}{Variational Analysis}, vol. \bibinfo{volume}{317},
  \bibinfo{publisher}{Springer Science \& Business Media},
  \bibinfo{year}{2009}.

\bibitem[{Teel(2013{\natexlab{b}})}]{Teel2013matrosov}
\bibinfo{author}{A.~R. Teel}, \bibinfo{title}{A {M}atrosov theorem for
  adversarial {M}arkov decision processes}, \bibinfo{journal}{IEEE Trans.
  Autom. Control} \bibinfo{volume}{58}~(\bibinfo{number}{8})
  (\bibinfo{year}{2013}{\natexlab{b}}) \bibinfo{pages}{2142--2148}.

\bibitem[{Teel et~al.(2013)Teel, Hespanha, and Subbaraman}]{Teel2013equivalent}
\bibinfo{author}{A.~R. Teel}, \bibinfo{author}{J.~P. Hespanha},
  \bibinfo{author}{A.~Subbaraman}, \bibinfo{title}{Equivalent characterizations
  of input-to-state stability for stochastic discrete-time systems},
  \bibinfo{journal}{IEEE Trans. Autom. Control}
  \bibinfo{volume}{59}~(\bibinfo{number}{2}) (\bibinfo{year}{2013})
  \bibinfo{pages}{516--522}.

\bibitem[{Ne{\v{s}}i{\'c} et~al.(2013)Ne{\v{s}}i{\'c}, Teel, Valmorbida, and
  Zaccarian}]{Nesic2013finite}
\bibinfo{author}{D.~Ne{\v{s}}i{\'c}}, \bibinfo{author}{A.~R. Teel},
  \bibinfo{author}{G.~Valmorbida}, \bibinfo{author}{L.~Zaccarian},
  \bibinfo{title}{Finite-gain $\mathcal{L}_{p}$ stability for hybrid dynamical
  systems}, \bibinfo{journal}{Automatica}
  \bibinfo{volume}{49}~(\bibinfo{number}{8}) (\bibinfo{year}{2013})
  \bibinfo{pages}{2384--2396}.

\bibitem[{Postoyan et~al.(2014)Postoyan, Van~de Wouw, Ne{\v{s}}i{\'c}, and
  Heemels}]{Postoyan2014tracking}
\bibinfo{author}{R.~Postoyan}, \bibinfo{author}{N.~Van~de Wouw},
  \bibinfo{author}{D.~Ne{\v{s}}i{\'c}}, \bibinfo{author}{W.~M.~H. Heemels},
  \bibinfo{title}{Tracking control for nonlinear networked control systems},
  \bibinfo{journal}{IEEE Trans. Autom. Control}
  \bibinfo{volume}{59}~(\bibinfo{number}{6}) (\bibinfo{year}{2014})
  \bibinfo{pages}{1539--1554}.

\bibitem[{Laila and Ne{\v{s}}i{\'c}(2002)}]{Laila2002lyapunov}
\bibinfo{author}{D.~S. Laila}, \bibinfo{author}{D.~Ne{\v{s}}i{\'c}},
  \bibinfo{title}{Lyapunov based small-gain theorem for parameterized
  discrete-time interconnected {ISS} systems}, in: \bibinfo{booktitle}{Proc.
  IEEE Conf. Decis. Control}, \bibinfo{organization}{IEEE},
  \bibinfo{pages}{2292--2297}, \bibinfo{year}{2002}.

\end{thebibliography}
\end{document}